\newcommand{\toy}{\mathcal{TOY}} 
\newcommand{\REAL}{\mathbb{R}} 
\newcommand{\qdom}{\mathcal{D}} 
\newcommand{\dqdom}{D \setminus \{\bot\}} 
\newcommand{\B}{\mathcal{B}}
\newcommand{\U}{\mathcal{U}}
\newcommand{\W}{\mathcal{W}}
\newcommand{\Set}{\mathcal{S}}
\newcommand{\simrel}{\mathcal{R}} 
\newcommand{\SL}{\mathcal{T}} 
\newcommand{\GL}{\mathcal{G}} 
\newcommand{\qlp}[1]{QLP({#1})} 
\newcommand{\sqlp}[2]{SQLP({#1,#2})} 
\newcommand{\trans}[2]{S_{#1}(#2)} 
\newcommand{\extended}[2]{H_{#1}(#2)} 
\newcommand{\abstracted}[2]{{#1}_{#2}} 
\newcommand{\MAProg}{\mathcal{P}_{E, \simrel}}
\newcommand{\diff}{~{\Longleftrightarrow_{\mathrm{def}}}~} 
\newcommand{\union}{\bigcup} 
\newcommand{\inter}{\bigcap} 
\newcommand{\supr}{\bigsqcup} 
\newcommand{\infi}{\bigsqcap} 
\newcommand{\Dentail}{~{\succcurlyeq_{\qdom}}~} 
\newcommand{\DentailSim}{~{\succcurlyeq_{(\simrel,\qdom)}}~} 
\newcommand{\Prog}{\mathcal{P}} 
\newcommand{\Var}{\mathcal V\!ar} 
\newcommand{\War}{\mathcal W\!ar} 
\newcommand{\Tp}{\mathrm{T}_{\Prog}} 
\newcommand{\Mp}{\mathcal{M}_{\Prog}} 
\newcommand{\M}[1]{\mathcal{M}_{#1}} 
\newcommand{\Atz}{\mathrm{At}_{\Sigma}} 
\newcommand{\QAtz}{\mathrm{At}_{\Sigma}(\qdom)} 
\newcommand{\intd}{\mathrm{Int}_{\Sigma}(\qdom)} 
\newcommand{\intrd}{\mathrm{Int}_{\Sigma}(\simrel,\qdom)} 
\newcommand{\I}{\mathcal{I}} 
\newcommand{\at}[2]{#1\,\sharp\,#2} 
\newcommand{\ats}[1]{\overline{#1}} 
\newcommand{\qgets}[1]{\gets\!#1\!-} 
\newcommand{\sep}{~{\talloblong}~} 
\newcommand{\linear}[1]{#1_\ell} 
\newcommand{\qhl}[1]{QHL({#1})} 
\newcommand{\sqhl}[2]{SQHL({#1,#2})} 
\newcommand{\sqhlx}[3]{\vdash_{\mathrm{SQHL}(#1,#2)}^{#3}} 
\newcommand{\sqhlrd}{\sqhlx{\simrel}{\qdom}{}} 
\newcommand{\sqhlrdn}[1]{\sqhlx{\simrel}{\qdom}{#1}} 
\newcommand{\qhlx}[2]{\vdash_{\mathrm{QHL}(#1)}^{#2}} 
\newcommand{\qhld}{\qhlx{\qdom}{}} 
\newcommand{\qhldn}[1]{\qhlx{\qdom}{#1}} 
\newcommand{\MALP}{{\em MALP }}
\newcommand{\malp}{{\em MALP}}
\theoremstyle{definition}
\newtheorem{definition}{Definition}
\theoremstyle{plain}
\newtheorem{lemma}{Lemma}
\newtheorem{proposition}{Proposition}
\newtheorem{theorem}{Theorem}
\newtheorem{example}{Example}
\begin{document}

\conferenceinfo{PPDP'08,} {July 15--17, 2008, Valencia, Spain.}
\CopyrightYear{2008}
\copyrightdata{978-1-60558-117-0/08/07}

\titlebanner{banner above paper title}        
\preprintfooter{short description of paper}   

\title{Similarity-based Reasoning in Qualified Logic Programming}
\subtitle{Revised Edition}

\authorinfo{Rafael Caballero \and Mario Rodr\'iguez-Artalejo \and Carlos A. Romero-D\'iaz}
           {Departamento de Sistemas Inform\'aticos y Computaci\'on \\ Universidad Complutense de Madrid, Spain}
           {$\{$rafa,mario$\}$@sip.ucm.es, cromdia@fdi.ucm.es}

\maketitle

\begin{abstract}
{\em Similarity-based Logic Programming} (briefly, $SLP$) has been proposed to enhance the $LP$ paradigm with a kind of approximate reasoning which supports flexible information retrieval applications. This approach uses a fuzzy similarity relation $\simrel$ between symbols in the program's signature,  while keeping the syntax for program  clauses as in classical $LP$. Another recent proposal is the  $\qlp{\qdom}$ scheme for {\em Qualified Logic Programming}, an extension of the $LP$ paradigm which supports approximate reasoning and more. This approach uses annotated program clauses and a parametrically given domain $\qdom$  whose elements  qualify logical assertions by measuring their closeness to various users' expectations. In this paper we propose  a more expressive scheme $\sqlp{\simrel}{\qdom}$ which subsumes both $SLP$ and $\qlp{\qdom}$ as particular cases. We also show that $\sqlp{\simrel}{\qdom}$ programs can be transformed into semantically equivalent $\qlp{\qdom}$ programs.  As a consequence, existing $\qlp{\qdom}$ implementations can be used to give efficient support for similarity-based reasoning.
\end{abstract}

\category{D.1.6}{Programming Techniques}{Logic Programming}
\category{D.3.2}{Programming Languages}{Language Classifications}[Constraint and logic languages]
\category{F.3.2}{Theory of Computation}{Logics and Meanings of Programs}[Algebraic approaches to semantics]

\terms Algorithms, Languages, Theory

\keywords
Qualification Domains, Similarity Relations

\section{Introduction} \label{Introduction}

The historical evolution of the research on uncertainty in {\em Logic Programming} ($LP$) has been described in a recent recollection by V. S. Subrahmanian \cite{Sub07}. Early approaches include the quantitative  treatment of uncertainty in the spirit of fuzzy logic, as in van Emden's classical paper \cite{VE86} and two subsequent papers by Subrahmanian \cite{Sub87,Sub88}. The main contribution of \cite{VE86} was a rigorous declarative semantics for a $LP$ language with program clauses  of the form $A \qgets{d} \ats{B}$, where the head $A$ is an atom, the body $\ats{B}$ is a conjunction of atoms, and the so-called {\em attenuation} factor $d \in (0,1]$ attached to the clause's implication is used to propagate
to the head the certainty factor $d \times b$, where $b$ is the minimum of the certainty factors $d_i \in (0,1]$ previously computed for the various atoms occurring in the body. The papers \cite{Sub87,Sub88} proposed to use a special lattice $\SL$ in place of the lattice of the real numbers in the interval $[0,1]$ under their natural ordering. $\SL$ includes two isomorphic copies of $[0,1]$ whose elements are incomparable under $\SL$'s ordering and can be used separately to represent degrees of {\em truth} and {\em falsity}, respectively, thus enabling a simple treatment of
negation. Other main contributions of \cite{Sub87,Sub88} were the introduction of annotated program clauses and goals (later generalized to a much more expressive framework in \cite{KS92}), as well as goal solving procedures more convenient and powerful than those given in \cite{VE86}.

A more recent line of research is {\em Similarity-based Logic Programming}  (briefly, $SLP$) as presented in \cite{Ses02} and previous related works such as \cite{AF99,GS99,FGS00,Ses01}. This approach also uses the lattice $[0,1]$ to deal with uncertainty in the spirit of fuzzy logic. In contrast to approaches based on annotated clauses, programs in $SLP$ are just sets of definite Horn clauses as in classical $LP$. However, a {\em similarity relation} $\simrel$ (roughly, the fuzzy analog of an equivalence relation) between predicate and function symbols is used to enable the
unification terms that would be not unifiable in the classical sense, measured by some degree $\lambda \in (0,1]$. There are different proposals for the operational semantics of $SLP$ programs. One possibility is to apply  classical $SLD$ resolution w.r.t. a transformation of the original program \cite{GS99,Ses01,Ses02}. Alternatively, a $\simrel$-based $SLD$-resolution procedure relying on $\simrel$-unification can be applied w.r.t. to the original program, as proposed in \cite{Ses02}. Propositions 7.1 and 7.2 in \cite{Ses02} state a correspondence between the answers computed by
$\simrel$-based $SLD$ resolution w.r.t. a given logic program $\Prog$ and the answers computed by classical $SLD$ resolution w.r.t. the two transformed programs $\extended{\lambda}{\Prog}$ (built by adding to $\Prog$ new clauses $\simrel$-similar to those in $\Prog$ up to the degree $\lambda \in (0,1]$) and $\abstracted{\Prog}{\lambda}$ (built by replacing all the function and predicate symbols in $\Prog$ by new symbols that represent
equivalence classes modulo $\simrel$-similarity up to $\lambda$). The $SiLog$ system  \cite{LSS04} has been developed to implement $SLP$ and to support applications related to flexible information retrieval from the web.

The aim of the present paper is to show that similarity-based reasoning can be expressed in $\qlp{\qdom}$, a programming scheme for {\em Qualified} $LP$ over a parametrically given {\em Qualification Domain} $\qdom$ recently presented in \cite{RR08} as a generalization and improvement of the classical approach by van Emden \cite{VE86} to $Quantitative$ $LP$. Qualification domains are lattices satisfying certain natural axioms. They include the lattice $[0,1]$ used both in \cite{VE86} and in \cite{Ses02}, as well as other lattices whose elements can be used to qualify logical
assertions by measuring their closeness to different kinds of users' expectations. Programs in $\qlp{\qdom}$ use $\qdom$-attenuated clauses of the form $A \qgets{d} \ats{B}$ where $A$ is an atom, $\ats{B}$ a finite conjunction of atoms and $d \in \dqdom$ is the {\em attenuation value} attached to the clause's implication, used to propagate to the head the {\em qualification value} $d \circ b$, where $b$ is the infimum in $\qdom$ of the qualification values $d_i \in \dqdom$ previously computed for the various atoms occurring in the body, and  $\circ$ is an {\em attenuation operator} coming
with $\qdom$. As reported in \cite{RR08,RR08TR}, the classical results in $LP$ concerning the existence of least Herbrand models of programs and the soundness and completeness of the  $SLD$ resolution procedure (see e.g.\cite{VEK76,AVE82,Apt90}) have been extended to the $\qlp{\qdom}$ scheme, and potentially useful instances of the scheme have been implemented on top of the {\em Constraint Functional Logic Programming} ($CFLP$) system $\toy$ \cite{toy}.

The results presented in this paper can be summarized as follows: we consider generalized similarity relations over a set $S$ as mappings $\simrel: S \times S \to D$ taking values in the carrier set $D$ of an arbitrarily  given qualification domain $\qdom$, and we  extend $\qlp{\qdom}$ to a more expressive scheme $\sqlp{\simrel}{\qdom}$ with two parameters for programming modulo $\simrel$-similarity with $\qdom$-attenuated Horn clauses. We present a declarative semantics for $\sqlp{\simrel}{\qdom}$ and a program transformation mapping each $\sqlp{\simrel}{\qdom}$ program $\Prog$ into a
$\qlp{\qdom}$ program $\trans{\simrel}{\Prog}$ whose least Herbrand model corresponds to that of $\Prog$. Roughly, $\trans{\simrel}{\Prog}$ is built adding to $\Prog$ new clauses obtained from the original clauses in $\Prog$ by computing various new heads $\simrel$-similar to a linearized version of the original head, adding also $\simrel$-similarity conditions $X_i \sim X_j$  to the body and suitable clauses for  the  new predicate $\sim$ to
emulate $\simrel$-based unification. Thanks to the $\trans{\simrel}{\Prog}$ transformation, the sound and complete procedure for solving goals in $\qlp{\qdom}$ by $\qdom$-qualified $SLD$ resolution and its implementation in the $\toy$ system \cite{RR08} can be used to implement $\sqlp{\simrel}{\qdom}$ computations, including as a particular case $SLP$ computations in the sense of \cite{Ses02}.

Another recent proposal for reducing the $SLP$ approach in \cite{Ses02} to a fuzzy $LP$ paradigm can be found in \cite{MOV04}, a  paper which relies on  the multi-adjoint framework for Logic Programming ({\MALP} for short) previously proposed in \cite{MOV01a,MOV01b}. {\MALP} is a quite general framework supporting $LP$ with {\em weighted program rules} over different multi-adjoint lattices, each of which provides a particular choice of operators for implication, conjunction  and aggregation of atoms in rule bodies. In comparison to the $\qlp{\qdom}$ scheme, the multi-adjoint framework differs in motivation and scope. Multi-adjoint lattices and qualification domains are  two different classes of algebraic structures. Concerning declarative and
operational semantics, there are also some significant differences between $\qlp{\qdom}$ and \malp. In particular, \MALP's goal solving procedure relies on a costly computation of {\em reductant clauses}, a technique borrowed from \cite{KS92} which can be avoided in $\qlp{\qdom}$, as discussed in the concluding section of \cite{RR08}.

In spite of these differences, the results in \cite{MOV04} concerning the emulation of similarity-based  can be compared to those in the present paper. Theorem 24 in \cite{MOV04} shows that  every classical logic  program $\Prog$ can be transformed into a \MALP program $\MAProg$ which can be executed using only syntactical unification and emulates the successful computations of $\Prog$ using the $SLD$ resolution with $\simrel$-based unification
introduced  in \cite{Ses02}. $\MAProg$ works over a particular multi-adjoint lattice $\GL$ with carrier set $[0,1]$ and implication and conjunction operators chosen according to the so-called G\"{o}del's semantics \cite{Voj01}. $\MAProg$ also introduces clauses for a binary predicate $\sim$ which emulates $\simrel$-based unification, as in our transformation $\trans{\simrel}{\Prog}$. Nevertheless, $\trans{\simrel}{\Prog}$   is defined for a more general class of programs and uses the $\simrel$-similarity predicate $\sim$ only if the source program $\Prog$ has some clause whose head is non-linear. More detailed comparisons between the program transformations $\trans{\simrel}{\Prog}$,  $\extended{\lambda}{\Prog}$, $\abstracted{\Prog}{\lambda}$ and $\MAProg$ will be given in Subsection \ref{sec:RA}.

The rest of the paper is structured as follows: In Section \ref{Domains} we recall the qualification domains $\qdom$  first introduced in \cite{RR08} and we define similarity relations $\simrel$ over an arbitrary qualification domain. In Section \ref{Language} we recall the scheme $\qlp{\qdom}$ and we introduce its extension $\sqlp{\simrel}{\qdom}$ with its declarative semantics, given by a logical calculus which characterizes the least
Herbrand model $\M{\Prog}$ of each $\sqlp{\simrel}{\qdom}$ program $\Prog$. In Section \ref{Reduction} we define the transformation $\trans{\simrel}{\Prog}$ of any given $\sqlp{\simrel}{\qdom}$ program $\Prog$ into a $\qlp{\qdom}$ program $\trans{\simrel}{\Prog}$ such that $\M{\trans{\simrel}{\Prog}} = \M{\Prog}$, we give some comparisons to previously known program transformations, and we illustrate the application of
$\trans{\simrel}{\Prog}$ to similarity-based computation by means of a simple example. Finally, in Section \ref{Conclusions} we summarize conclusions and comparisons to related work and we point to planned lines of future work.

\section{Qualification Domains and Similarity Relations} \label{Domains}

\subsection{Qualification Domains} \label{QD}
{\em Qualification Domains}  were introduced in \cite{RR08} with the aim of using their elements to qualify logical assertions in different ways. In this subsection we recall their axiomatic definition and some significant examples.

\begin{definition}  \label{defQD}
A  {\em Qualification Domain} is any structure $\qdom = \langle D, \sqsubseteq,$ $\bot, \top, \circ \rangle$ verifying the following requirements:
\begin{enumerate}
    \item $\langle D, \sqsubseteq, \bot, \top \rangle$ is a lattice with extreme points $\bot$ and $\top$ w.r.t. the partial ordering $\sqsubseteq$. For given elements  $d, e \in D$, we  write $d\, \sqcap\, e$ for the {\em greatest lower bound} ($glb$) of $d$ and $e$ and $d\, \sqcup\, e$ for the {\em least upper bound} ($lub$) of $d$ and $e$. We also write $d \sqsubset e$ as abbreviation for $d \sqsubseteq e\, \land\, d \neq e$.
    \item $\circ : D \times D \rightarrow D$, called {\em attenuation operation}, verifies the following axioms:
        \begin{enumerate}
            \item $\circ$ is associative, commutative and monotonic w.r.t. $\sqsubseteq$.
            \item $\forall d \in D:\, d \circ \top = d$.
            \item $\forall d \in D:\, d \circ \bot = \bot$.
            \item $\forall d, e \in D \setminus \{\bot,\top\}:\, d \circ e\, \sqsubset\, e$.
            \item $\forall d, e_1, e_2 \in D:\, d  \circ (e_1 \sqcap e_2) = d \circ e_1\, \sqcap\, d \circ e_2$. \qed
        \end{enumerate}
\end{enumerate}
\end{definition}

In the rest of the paper, $\qdom$ will generally denote an arbitrary qualification domain. For any finite $S = \{e_{1}, e_{2}, \ldots, e_{n}\} \subseteq D$, the $glb$ of $S$ (noted as $\bigsqcap S$) exists and can be computed as $e_{1} \sqcap e_{2} \sqcap \cdots \sqcap e_{n}$ (which reduces to $\top$ in the case $n = 0$). As an easy consequence of the axioms, one gets the identity $d \circ \bigsqcap S =  \bigsqcap \{d \circ e \mid e \in S\}$.  The $\qlp{\qdom}$ scheme presented in \cite{RR08} supports $LP$ over a parametrically given qualification domain $\qdom$.

\begin{example} \label{someDomains}
Some examples of  qualification domains are presented below. Their intended use for qualifying logical assertions will become more clear in Subsection \ref{QLP}.
\begin{enumerate}
\item
$\B = (\{0,1\}, \le, 0, 1, \land)$, where $0$ and $1$ stand for the two classical truth values  \emph{false} and \emph{true}, $\leq$ is the usual numerical ordering over $\{0,1\}$, and $\land$ stands for the classical conjunction operation over $\{0,1\}$. Attaching $1$ to  an atomic formula  $A$ is intended to qualify $A$ as `true' in the sense of classical $LP$.
\item
$\U = (\mbox{U}, \leq, 0, 1,\times)$, where $\mbox{U} = [0,1] = \{d \in \REAL \mid 0 \le d \le 1\}$, $\le$ is the usual numerical ordering, and $\times$ is the multiplication operation. In this domain,  the top element $\top$ is $1$ and the greatest lower bound $\bigsqcap S$ of a finite $S \subseteq \mbox{U}$ is the minimum value min(S), which is $1$ if $S = \emptyset$. Attaching an  element $c \in \mbox{U} \setminus \{0\}$  to an atomic formula  $A$ is intended to qualify $A$ as `true with certainty  degree $c$' in the spirit of fuzzy logic, as done in the classical paper \cite{VE86} by van Emden. The computation of qualifications $c$ as certainty degrees in $\U$ is due to the interpretation of $\sqcap$ as $min$ and $\circ$ as $\times$.
\item
$\W = (\mbox{P}, \ge, \infty, 0, +)$, where $\mbox{P} = [0,\infty] = \{d \in \REAL \cup \{\infty\} \mid d \ge 0\}$, $\geq$ is the reverse of the usual numerical ordering (with $\infty \ge d$ for any $d \in \mbox{P}$), and $+$ is the addition operation (with $\infty + d = d + \infty = \infty$ for any $d \in \mbox{P}$). In this domain,  the top element $\top$ is $0$ and the greatest lower bound $\bigsqcap S$ of a finite $S \subseteq \mbox{P}$ is the maximum value max(S), which is $0$ if $S = \emptyset$. Attaching an element $d \in \mbox{P} \setminus \{\infty\}$ to an atomic formula $A$ is intended to qualify $A$ as `true with weighted proof depth $d$'. The computation of qualifications $d$ as weighted proof depths in $\W$ is due to the interpretation  of $\sqcap$ as $max$ and $\circ$ as $+$.
\item
Given 2 qualification domains  $\qdom_i=\langle D_i, \sqsubseteq_i, \bot_i, \top_i, \circ_i \rangle$ ($i \in \{1, 2\}$), their {\em cartesian product} $\qdom_1 \times \qdom_2$ is $\qdom =_{\mathrm{def}} \langle D, \sqsubseteq, \bot, \top, \circ \rangle$, where  $D =_{\mathrm{def}} D_1 \times D_2$, the partial ordering $\sqsubseteq$ is defined as $(d_1,d_2) \sqsubseteq (e_1,e_2) \diff d_1 \sqsubseteq_1 e_1$ and $d_2 \sqsubseteq_2 e_2$, $\bot =_{\mathrm{def}} (\bot_1, \bot_2)$, $\top =_{\mathrm{def}} (\top_1, \top_2)$, and the attenuation operator $\circ$ is defined as
$(d_1,d_2) \circ (e_1,e_2) =_{\mathrm{def}} (d_1 \circ_1 e_1, d_2 \circ_2 e_2)$. The product of two given qualification domains is always another  qualification domain, as proved in \cite{RR08}. Intuitively, each value $(d_1,d_2)$ belonging to $\qdom_1 \times \qdom_2$ imposes the qualification $d_1$ {\em and also} the qualification $d_2$. For instance, values $(c,d)$ belonging to $\U \times \W$ impose two qualifications, namely: a certainty degree greater or equal than $c$ and a weighted proof depth less or equal than $d$. \qed
\end{enumerate}
\end{example}

For technical reasons that will become apparent in Section \ref{Reduction}, we consider the two structures $\U'$ resp. $\W'$ defined analogously to  $\U$ resp. $\W$, except that $\circ$ behaves as $min$ in $\U'$ and as $max$ in $\W'$. Note that almost all the axioms for qualification domains enumerated in Definition \ref{defQD} hold in $\U'$ and $\W'$, except that axiom $2. (d)$ holds only in the relaxed form
$\forall d, e \in D:\, d \circ e\, \sqsubseteq\, e$. Therefore, we will refer to $\U'$ and $\W'$ as {\em quasi} qualification domains.

\subsection{Similarity relations} \label{SR}

{\em Similarity relations} over a given set $S$  have been defined in \cite{Ses02} and related literature as mappings $\simrel : S \times S \to [0,1]$ that satisfy three axioms analogous to those required for classical equivalence relations. Each value $\simrel(x,y)$ computed by a similarity relation $\simrel$ is called the {\em similarity  degree} between $x$ and $y$. In this paper we use a natural extension of the definition given in \cite{Ses02}, allowing elements of an arbitrary qualification domain $\qdom$  to serve as similarity degrees. As in \cite{Ses02}, we  are especially interested in similarity relations over sets $S$ whose elements are variables and symbols of a given signature.

\begin{definition} \label{defSR} Let a qualification domain $\qdom$ with carrier set $D$ and a set $S$ be given.
\begin{enumerate}
\item A {\em $\qdom$-valued similarity relation} over $S$ is any mapping  $\simrel : S \times S \to D$ such that the three following axioms hold for all $x,y,z \in S$:
    \begin{enumerate}
        \item {\em Reflexivity:} $\simrel(x,x) = \top$.
        \item {\em Symmetry:} $\simrel(x,y) = \simrel(y,x)$.
        \item {\em Transitivity:} $\simrel(x,z) \sqsupseteq \simrel(x,y)\, \sqcap\, \simrel(y,z)$.
    \end{enumerate}
\item The mapping $\simrel : S \times S \to D$ defined as $\simrel(x,x) = \top$ for all $x \in D$ and $\simrel(x,y) = \bot$ for all $x,y \in D$, $x \neq y$ is trivially a $\qdom$-valued similarity relation called the \emph{identity}.
\item A $\qdom$-valued similarity relation $\simrel$ over $S$ is called {\em admissible} iff $S = \Var\, \cup\, CS \cup\, PS$ (where the three mutually disjoint sets $\Var$, $CS$ and $PS$ stand for a countably infinite collection of {\em variables}, a set of {\em constructor symbols} and a set of {\em predicate symbols}, respectively) and the two following requirements are satisfied:
    \begin{enumerate}
        \item $\simrel$ restricted to $\Var$ behaves as the identity, i.e. $\simrel(X,X) = \top$ for all $X \in \Var$ and $\simrel(X,Y) = \bot$ for all $X, Y \in \Var$, $X \neq Y$.
        \item $\simrel(x,y) \neq \bot$ holds only if some of the following three cases holds $x, y$: either $x, y \in \Var$ are both the same variable; or else $x, y \in CS$ are constructor symbols with the same arity; or else $x, y \in PS$ are predicate  symbols with the same arity. \qed
    \end{enumerate}
\end{enumerate}
\end{definition}

The similarity degrees computed by a $\qdom$-valued similarity relation must be interpreted w.r.t. the intended role of $\qdom$-elements as qualification values. For example, let  $\simrel$ be an admissible similarity relation, and let $c, d \in CS$ be two nullary constructor symbols (i.e., constants). If $\simrel$ is $\U$-valued, then $\simrel(c,d)$ can be interpreted as a {\em certainty degree} for the assertion that $c$ and $d$ are similar. On the other hand, if $\simrel$ is $\W$-valued, then $\simrel(c,d)$ can be interpreted as a {\em cost} to be paid for $c$ to play the role of $d$. These two views are coherent with the different interpretations of the operators $\sqcap$ and $\circ$ in $\U$ and $\W$, respectively.

In the rest of the paper we assume that any admissible similarity relation $\simrel$ can be extended to act over terms, atoms and clauses. The extension, also called $\simrel$, can be recursively defined as in \cite{Ses02}. The following definition specifies the extension of $\simrel$ acting over terms. The case of atoms and clauses is analogous.

\begin{definition} ($\simrel$ acting over terms). \label{def:ER}
\begin{enumerate}
    \item For $X \in \Var$ and for any term $t$ different from $X$:\\
    $\simrel(X,X) = \top$ and $\simrel(X,t) = \simrel(t,X) = \bot$.

    \item For $c, c' \in CS$ with different arities $n$, $m$:\\
    $\simrel(c(t_1, \ldots, t_n), c'(t_1', \ldots, t_m')) = \bot$.

    \item For $c, c' \in CS$ with the same arity $n$:\\
    $\simrel(c(t_1, \ldots, t_n), c'(t_1', \ldots, t_n')) = \simrel(c,c') \sqcap \simrel(t_1,t_1') \sqcap \ldots \sqcap \simrel(t_n,t_n')$.
\end{enumerate}
\end{definition}

\section{Similarity-based Qualified Logic Programming}\label{Language}

In this section  we extend our previous scheme $\qlp{\qdom}$   to a more expressive scheme called \emph{Similarity-based Qualified Logic Programming} over $(\simrel,\qdom)$ --abbreviated as $\sqlp{\simrel}{\qdom}$-- which supports both qualification over $\qdom$ in the sense of \cite{RR08} and $\simrel$-based similarity in the sense of \cite{Ses02} and related research. Subsection \ref{QLP} presents a quick review of the main results concerning syntax and declarative semantics of $\qlp{\qdom}$ already presented in \cite{RR08}, while the extensions needed to conform the new $\sqlp{\simrel}{\qdom}$ scheme are presented in subsection \ref{SQLP}.

\subsection{Qualified Logic Programming} \label{QLP}

$\qlp{\qdom}$ was proposed in our previous work \cite{RR08} as a generic scheme for qualified logic programming over a given qualification domain $\qdom$. In that scheme, a \emph{signature} $\Sigma$ providing constructor and predicate symbols with given arities is assumed. \emph{Terms} are built from constructors and \emph{variables} from a countably infinite set $\Var$ (disjoint from $\Sigma$)  and \emph{Atoms} are of the form $p(t_1, \ldots, t_n)$ (shortened as $p(\ats{t_n})$ or simply $p(\ats{t})$) where $p$ is a $n$-ary predicate symbol and $t_i$ are terms. We write $\Atz$, called the \emph{open Herbrand base}, for the set of all atoms. A $\qlp{\qdom}$ program $\Prog$ is a finite set of \emph{$\qdom$-qualified definite Horn clauses} of the form $A \qgets{d} \ats{B}$ where $A$ is an atom, $\ats{B}$ a finite conjunction of atoms and $d \in \dqdom$ is the \emph{attenuation value} attached to the clause's implication.

As explained in  \cite{RR08}, in our aim to work with qualifications we are not only interested in just proving an atom, but in proving it along with a qualification value. For this reason, \emph{$\qdom$-qualified atoms} ($\at{A}{d}$ where $A$ is an atom and $d \in \dqdom$) are introduced to represent the statement that the atom $A$ holds for \emph{at least} the qualification value $d$. For use in goals to be solved, \emph{open $\qdom$-annotated atoms} ($\at{A}{W}$ where $A$ is an atom and $W$ a \emph{qualification variable} intended to take values over $\qdom$) are also introduced, and a countably infinite set $\War$ of qualification variables (disjoint from $\Var$ and $\Sigma$) is postulated. The \emph{annotated Herbrand base} over $\qdom$ is defined as the set $\QAtz$ of all $\qdom$-qualified atoms. A \emph{$\qdom$-entailment relation} over $\QAtz$, defined as $\at{A}{d} \Dentail \at{A'}{d'}$ iff there is some substitution $\theta$ such that $A' = A\theta$ and $d' \sqsubseteq d$, is used to formally define an \emph{open Herbrand interpretation} over $\qdom$ --from now on just an \emph{interpretation}-- as any subset $\I \subseteq \QAtz$ which is closed under $\qdom$-entailment. We write $\intd$ for the family of all interpretations. The notion of model is such that given any clause $C \equiv A \qgets{d} B_1, \ldots, B_k$ in the $\qlp{\qdom}$ program $\Prog$, an interpretation $\I$ is said to be a \emph{model} of $C$ iff for any substitution $\theta$ and any qualification values $d_1, \ldots, d_k \in \dqdom$ such that $\at{B_i\theta}{d_i} \in \I$ for all $1 \le i \le k$, one has $\at{A\theta}{(d \circ \infi \{d_1, \ldots, d_k\})} \in \I$. The interpretation $\I$ is also said to be a model of the $\qlp{\qdom}$ program $\Prog$ (written as $\I \models \Prog$) iff it happen to be a model of every clause in $\Prog$.

As technique to infer formulas (or in our case $\qdom$-qualified atoms) from a given $\qlp{\qdom}$ program $\Prog$, and following traditional ideas, we consider two alternative ways of formalizing an inference step which goes from the body of a clause to its head: both an interpretation transformer $\Tp : \intd \to \intd$, and a qualified variant of Horn Logic, noted as $\qhl{\qdom}$, called \emph{Qualified Horn Logic} over $\qdom$. As both methods are equivalent and correctly characterize the least Herbrand model of a given program $\Prog$, we will only be recalling the logic $\qhl{\qdom}$, although we encourage the reader to see Section 3.2 in \cite{RR08}, where the fix-point semantics is explained.

The logic $\qhl{\qdom}$ is defined as a deductive system consisting just of one inference rule: $\mbox{QMP}(\qdom)$, called \emph{Qualified Modus Ponens} over $\qdom$. Such rule allows us to give the following inference step given that there were some $(A \qgets{d} B_1, \ldots, B_k) \in \Prog$, some substitution $\theta$ such that $A' = A\theta$ and $B'_i = B_i\theta$ for all $1 \le i \le k$ and some $d' \in \dqdom$ such that $d' \sqsubseteq d \circ \infi \{d_1, \ldots, d_k\}$:
\[
    \frac
    {\quad \at{B'_{1}}{d_{1}} \quad \cdots \quad \at{B'_{k}}{d_{k}} \quad}
    {\at{A'}{d'}} \quad \mbox{QMP}(\qdom)
\]
Roughly, each $\mbox{QMP}(\qdom)$ inference step using an instance of a program clause $A \qgets{d} \ats{B}$ has the effect of propagating to the head the {\em qualification value} $d \circ b$, where $b$ is the infimum in $\qdom$ of the qualification values  $d_i \in \dqdom$ previously computed for the various atoms occurring in the body. This helps to understand the claims made in Example \ref{someDomains} above about the intended use of elements of the domains $\U$ and $\W$ for qualifying logical assertions. We use the notations $\Prog \qhld \at{A}{d}$ (resp. $\Prog \qhldn{n} \at{A}{d}$) to indicate that $\at{A}{d}$ can be inferred from the clauses in program $\Prog$ in finitely many steps (resp. $n$ steps). The \emph{least Herbrand model of $\Prog$} happens to be $\Mp = \{\at{A}{d} \mid \Prog \qhld \at{A}{d}\}$, as proved in \cite{RR08}.

\subsection{Similarity-based Qualified Logic Programming} \label{SQLP}

\begin{figure}
\begin{center}
\tt
\begin{tabular}{|r@{\hspace{0.2cm}}l|}
\hline
&\\
\scriptsize 1 & wild(lynx) <-0.9- \\
\scriptsize 2 & wild(boar) <-0.9- \\
\scriptsize 3 & wild(snake) <-1.0- \\
&\\
\scriptsize 4 & farm(cow) <-1.0- \\
\scriptsize 5 & farm(pig) <-1.0- \\
&\\
\scriptsize 6 & domestic(cat) <-0.8- \\
\scriptsize 7 & domestic(snake) <-0.4- \\
&\\
\scriptsize 8 & intelligent(A) <-0.9- domestic(A) \\
\scriptsize 9 & intelligent(lynx) <-0.7- \\
&\\
\scriptsize 10 & pacific(A) <-0.9- domestic(A) \\
\scriptsize 11 & pacific(A) <-0.7- farm(A) \\
&\\
\scriptsize 12 & pet(A) <-1.0- pacific(A), intelligent(A) \\
&\\
\hline
&\\
&$\simrel$(farm,domestic) = 0.3 \\
&$\simrel$(pig,boar) = 0.7 \\
&$\simrel$(lynx,cat) = 0.8 \\
&\\
\hline
\end{tabular}
\end{center}
\caption{$\sqlp{\simrel}{\U}$ program.\label{fig:example}}
\end{figure}

The scheme $\sqlp{\simrel}{\qdom}$ presented in this subsection has two parameters $\simrel$ and $\qdom$, where $\qdom$ can be any qualification domain and $\simrel$ can be any admissible $\qdom$-valued similarity relation, in the sense of Definition \ref{defSR}. The new scheme subsumes the approach in \cite{RR08} by behaving as $\qlp{\qdom}$ in the case that $\simrel$  is chosen as the identity, and it also subsumes similarity-based $LP$ by behaving as the approach in \cite{Ses02} and related papers in  the case that $\qdom$ is chosen as $\U$.

Syntactically, $\sqlp{\simrel}{\qdom}$ presents almost no changes w.r.t. $\qlp{\qdom}$, but the declarative semantics must be extended to account for the behavior of the parametrically given similarity relation $\simrel$.  As in the previous subsection, we assume a signature $\Sigma$ providing again constructor and predicate symbols. \emph{Terms} and \emph{Atoms} are built the same way they were in $\qlp{\qdom}$, and $\Atz$ will stand again for the set of all atoms, called the \emph{open Herbrand base}. An atom $A$ is called {\em linear} if there is no variable with multiple occurrences in $A$; otherwise $A$ is called {\em non-linear}. A $\sqlp{\simrel}{\qdom}$ program $\Prog$ is a finite set of \emph{$\qdom$-qualified definite Horn clauses} with the same syntax as in $\qlp{\qdom}$, along with a $\qdom$-valued admissible similarity relation $\simrel$ in the sense of Definition \ref{defSR}, item 2. Figure \ref{fig:example} shows a simple $\sqlp{\simrel}{\U}$ program built from the similarity relation $\simrel$ given in the same figure and the qualification domain $\U$ for certainty values. This program will be used just for illustrative purposes in the rest of the paper. The reader is referred to Section \ref{Domains} for other examples of qualification domains, and to the references \cite{LSS04,MOV04} for suggestions concerning practical applications of similarity-based $LP$.

\emph{$\qdom$-qualified atoms} ($\at{A}{d}$ with $A$ an atom and $d \in \dqdom$) and \emph{open $\qdom$-annotated atoms} ($\at{A}{W}$ with $A$ and atom and $W \in War$ a qualification variable intended to take values in $\dqdom$) will still be used here. Similarly, the \emph{annotated open Herbrand base} over $\qdom$ is again defined as the set $\QAtz$ of all $\qdom$-qualified atoms. At this point, and before extending the notions of $\qdom$-entailment relation and interpretation to the $\sqlp{\simrel}{\qdom}$ scheme, we need to define what an $\simrel$-instance of an atom is.  Intuitively, when building $\simrel$-instances of an atom $A$, signature symbols occurring in $A$ can be replaced by similar ones, and different occurrences of the same variable in $A$ may be replaced by different terms, whose degree of similarity must be taken into account. Technically, $\simrel$-instances of an atom $A \in \Atz$ are built from a linearized version of $A$ which has the form $\mbox{lin}(A) = (\linear{A}, \linear{\Set})$ and is constructed as follows: $\linear{A}$ is a linear atom built from $A$ by replacing each $n$ additional occurrences of a variable $X$ by new fresh variables $X_i$ $(1 \leq i \leq n)$; and $\linear{\Set}$ is a set of \emph{similarity conditions} $X \sim X_i$ (with $1 \le i \le n$) asserting the similarity of all variables in $\linear{A}$ that correspond to the same variable $X$ in $A$. As a concrete illustration, let us show the linearization of two atoms. Note what happens when the atom $A$ is already linear as in the first case: $\linear{A}$ is just the same as $A$ and $\linear{\Set}$ is empty.

    \begin{itemize}
        \item $H_1 = p(c(X),Y)$ \\
        $\mbox{lin}(H_1) = (p(c(X),Y), \, \{\})$

        \item $H_2 = p(c(X),X,Y)$ \\
        $\mbox{lin}(H_2) = (p(c(X),X_1,Y), \, \{X \sim X_1\})$
    \end{itemize}

Now we are set to formally define the $\simrel$-instances of an atom.

\begin{definition}\label{def:r-instance-atom}
($\simrel$-instance of an atom). Assume an atom $A \in \Atz$ and its linearized version $\mbox{lin}(A) = (\linear{A}, \linear{\Set})$. Then, an atom $A'$ is said to be an $\simrel$-instance of $A$ with similarity degree $\delta$, noted as $(A',\delta) \in [A]_\simrel$, iff there are some atom $A^\Set$ and some substitution $\theta$ such that $A' = A^\Set\theta$ and $\delta = \simrel(\linear{A}, A^\Set) ~\sqcap~ \infi\{\simrel(X_i\theta, X_j\theta) \mid (X_i \sim X_j) \in \linear{\Set}\} \neq \bot$.
\end{definition}

Next, the \emph{$(\simrel, \qdom)$-entailment relation} over $\QAtz$ is defined as follows: $\at{A}{d} \DentailSim \at{A'}{d'}$ iff there is some similarity degree $\delta$ such that $(A',\delta) \in [A]_\simrel$ and $d' \sqsubseteq d \circ \delta$. Finally, an \emph{open Herbrand interpretation} --just \emph{interpretation} from now on-- over $(\simrel,\qdom)$ is defined as any subset $\I \in \QAtz$ which is closed under $(\simrel, \qdom)$-entailment. That is, an interpretation $\I$ including a given $\qdom$-qualified atom $\at{A}{d}$ is required to include all the `similar instances' $\at{A'}{d'}$ such that $\at{A}{d} \DentailSim \at{A'}{d'}$, because we intend to formalize a semantics in which all such similar instances are valid whenever $\at{A}{d}$ is valid. This complements the intuition given for the $\qdom$-entailment relation in $\qlp{\qdom}$ to include the similar instances (obtainable due to $\simrel$) of each atom, and not only those which are true because we can prove them for a better (i.e. higher in $\qdom$) qualification. Note that $(\simrel,\qdom)$-entailment is a refinement of $\qdom$-entailment, since: $\at{A}{d} \Dentail \at{A'}{d'}$ $\Longrightarrow$ there is some substitution $\theta$ such that $A' = A\theta$ and $d' \sqsubseteq d$ $\Longrightarrow$ $(A',\top) \in [A]_\simrel$ and $d' \sqsubseteq d \circ \top$ $\Longrightarrow$ $\at{A}{d} \DentailSim \at{A'}{d'}$.

As an example of the closure of interpretations w.r.t. $(\simrel, \qdom)$-entailment, consider the $\U$-qualified atom \texttt{domestic(cat)\#0.8}. As a trivial consequence of Proposition \ref{prop:least-model} below,  this atom belongs to the least Herbrand model of the program in Figure \ref{fig:example}. On the other hand, we also know that  \texttt{lynx} is similar to  \texttt{cat} with a similarity degree of $0.8$ w.r.t. the similarity relation $\simrel$ in Figure \ref{fig:example}. Therefore, \texttt{domestic(lynx)} is a $\simrel$-instance of  \texttt{domestic(cat)} to the degree $0.8$. Then, by definition of $(\simrel, \U)$-entailment, it turns out that \texttt{domestic(cat)\#0.8} $\succcurlyeq_{(\simrel,\U)}$ \texttt{domestic(lynx)\#0.64}, and the $\U$-qualified atom \texttt{domestic(lynx)\#0.64} does also belong to the least model of the example program. Intuitively, $0.64 = 0.8 \times 0.8$ is the best $\U$-qualification which can be inferred from the $\U$-qualification $0.8$ for  \texttt{domestic(cat)} and the $\simrel$-similarity $0.8$ between  \texttt{domestic(cat)} and  \texttt{domestic(lynx)}.

We will write $\intrd$ for the family of all interpretations over $(\simrel,\qdom)$, a family for which the following proposition can be easily proved from the definition of an interpretation and the definitions of the union and intersection of a family of sets.

\begin{proposition}\label{prop:lattice}
The family $\intrd$ of all interpretations over $(\simrel,\qdom)$ is a complete lattice under the inclusion ordering $\subseteq$, whose extreme points are $\intrd$ as maximum and $\emptyset$ as minimum. Moreover, given any family of interpretations $I \subseteq \intrd$, its lub and glb are $\infi I = \union \{\I \in \intrd \mid \I \in I\}$ and $\supr I = \inter \{\I \in \intrd \mid \I \in I\}$, respectively.
\end{proposition}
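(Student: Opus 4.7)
The proof is a routine verification that the lattice-theoretic structure on $\intrd$ is inherited from the powerset lattice $(2^{\QAtz}, \subseteq)$, once one checks that the defining closure property of interpretations (closure under $\DentailSim$) is preserved by arbitrary unions and intersections. The plan has three short steps.

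First, I would check the two trivial boundary cases. The empty set $\emptyset \subseteq \QAtz$ is vacuously closed under $(\simrel,\qdom)$-entailment, so $\emptyset \in \intrd$ and is clearly the $\subseteq$-minimum. The full set $\QAtz$ is likewise trivially closed (every element is already in it), so it belongs to $\intrd$ and serves as the $\subseteq$-maximum; I would note that the statement's use of $\intrd$ for the top element should be read as $\QAtz$, the top of the powerset.

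Second, for an arbitrary family $I \subseteq \intrd$, I would verify closure of $\bigcap I$ and $\bigcup I$ under $\DentailSim$. For the intersection, suppose $\at{A}{d} \in \inter I$ and $\at{A}{d} \DentailSim \at{A'}{d'}$; then $\at{A}{d} \in \I$ for every $\I \in I$, so by closure of each $\I$ we get $\at{A'}{d'} \in \I$ for every $\I \in I$, hence $\at{A'}{d'} \in \inter I$. The union case is symmetric: if $\at{A}{d} \in \union I$, then $\at{A}{d} \in \I$ for some $\I \in I$, and closure of that particular $\I$ yields $\at{A'}{d'} \in \I \subseteq \union I$. Neither argument uses any property of $\DentailSim$ beyond its role as a fixed relation defining closure, so the two cases are genuinely routine.

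Third, I would assemble these observations. Since $\intrd$ is closed under arbitrary intersections and unions and contains $\emptyset$ and $\QAtz$, it is a complete sublattice of $(2^{\QAtz}, \subseteq)$; in particular, for any $I \subseteq \intrd$, the glb and lub in $\intrd$ coincide with set-theoretic intersection and union, respectively. (I would also flag the apparent swap in the statement's final equation: the glb $\infi I$ equals $\inter\{\I \mid \I \in I\}$ and the lub $\supr I$ equals $\union\{\I \mid \I \in I\}$, matching the usual convention.) There is no real obstacle here; the only thing worth being careful about is making explicit that closure under $\DentailSim$ passes through both $\inter$ and $\union$, which is what distinguishes $\intrd$ from an arbitrary subfamily of $2^{\QAtz}$.
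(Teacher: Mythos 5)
Your proof is correct and is precisely the routine verification the paper has in mind (the paper omits the proof entirely, remarking only that the proposition ``can be easily proved from the definition of an interpretation and the definitions of the union and intersection of a family of sets''). You are also right to flag the two typos in the statement: the maximum element should be $\QAtz$ rather than $\intrd$, and the roles of $\infi$ and $\supr$ in the final formulas are swapped relative to the standard reading of $\bigsqcap$ as glb and $\bigsqcup$ as lub.
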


Similarly as we did for the $\simrel$-instances of an atom, we will define what the $\simrel$-instances of a clause are. The following definition tells us so.

\begin{definition}\label{def:r-instance-clause}
($\simrel$-instance of a clause). Assume a clause $C \equiv A \qgets{d} B_1, \ldots, B_k$ and the linearized version of its head atom $\mbox{lin}(A) = (\linear{A}, \linear{\Set})$. Then, a clause $C'$ is said to be an $\simrel$-instance of $C$ with similarity degree $\delta$, noted as $(C',\delta) \in [C]_\simrel$, iff there are some atom $A^\Set$ and some substitution $\theta$ such that $\delta = \simrel(\linear{A}, A^\Set) ~\sqcap~ \infi\{\simrel(X_i\theta,$ $X_j\theta) \mid (X_i \sim X_j) \in \linear{\Set}\} \neq \bot$ and $C' \equiv A^\Set\theta \qgets{d} B_1\theta, \ldots, B_k\theta$.
\end{definition}

Note that as an immediate consequence from Definitions \ref{def:r-instance-atom} and \ref{def:r-instance-clause} it is true that given two clauses $C$ and $C'$ such that $(C',\delta) \in [C]_\simrel$, and assuming $A$ to be head atom of $C$ and $A'$ to be the head atom of $C'$, then we have that $(A',\delta) \in [A]_\simrel$.

Let $C$ be any clause $A \qgets{d} B_1, \ldots, B_k$ in the program $\Prog$, and $\I \in \intrd$ any interpretation over $(\simrel,\qdom)$. We say that $\I$ is a model of $C$ iff for any clause $C' \equiv H' \qgets{d} B'_1, \ldots, B'_k$ such that $(C',\delta) \in [C]_\simrel$ and any qualification values $d_1, \ldots, d_k \in \dqdom$ such that $\at{B'_i}{d_i} \in \I$ for all $1 \le i \le k$, one has $\at{H'}{d'} \in \I$ where $d' = d \circ \infi \{e, d_1, \ldots, d_k\}$. And we say that $\I$ is a model of the $\sqlp{\simrel}{\qdom}$ program $\Prog$ (also written $\I \models \Prog$) iff $\I$ is a model of each clause in $\Prog$.


We will provide now a way to perform an inference step from the body of a clause to its head. As in the case of  $\qlp{\qdom}$, this can be formalized in two alternative ways, namely an interpretation transformer and a variant of Horn Logic. Both approaches lead to equivalent characterizations of least program models. Here we focus on the second approach, defining what we will call \emph{Similarity-based Qualified Horn Logic} over $(\simrel,\qdom)$ --abbreviated as $SQHL(\simrel,\qdom)$--, another variant of Horn Logic and an extension of the previous $QHL(\qdom)$. The logic $SQHL(\simrel,\qdom)$ is also defined as a deductive system consisting just of one inference rule $SQMP(\simrel,\qdom)$, called \emph{Similarity-based Qualified Modus Ponens} over $(\simrel,\qdom)$:

If $((A' \qgets{d} B'_1, \ldots, B'_k), \delta) \in [C]_\simrel$ for some clause $C \in \Prog$ with attenuation value $d$, then the following inference step is allowed for any $d' \in \dqdom$ such that $d' \sqsubseteq d \circ \infi \{\delta, d_1, \ldots, d_k\}$:
\[
    \frac
    {\quad \at{B'_{1}}{d_{1}} \quad \cdots \quad \at{B'_{k}}{d_{k}} \quad}
    {\at{A'}{d'}} \quad \mbox{SQMP}(\simrel,\qdom) \enspace .
\]

We will use the notations $\Prog \sqhlrd \at{A}{d}$ (respectively $\Prog \sqhlrdn{n} \at{A}{d}$) to indicate that $\at{A}{d}$ can be inferred from the clauses in program $\Prog$ in finitely many steps (respectively $n$ steps). Note that $\sqhl{\simrel}{\qdom}$ proofs can be naturally represented as upwards growing \emph{proof trees} with $\qdom$-qualified atoms at their nodes, each node corresponding to one inference step having the children nodes as premises.

The following proposition contains the main result concerning the declarative semantics of the $\sqlp{\simrel}{\qdom}$ scheme. A full proof can be developed in analogy to the $\qlp{\qdom}$ case presented in \cite{RR08,RR08TR}.

\begin{proposition}\label{prop:least-model}
Given any $\sqlp{\simrel}{\qdom}$ program $\Prog$. The \emph{least Herbrand model} ($\Mp$) of $\Prog$ is \[\{\at{A}{d} \mid \Prog \sqhlrd \at{A}{d}\} \enspace .\]
\end{proposition}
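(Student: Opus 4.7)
The plan is to follow the standard three-step strategy for characterizing least Herbrand models. Let $\mathcal{M} := \{\at{A}{d} \mid \Prog \sqhlrd \at{A}{d}\}$. I will establish, in order: (i) $\mathcal{M} \in \intrd$, i.e., $\mathcal{M}$ is closed under $\DentailSim$; (ii) $\mathcal{M} \models \Prog$; and (iii) $\mathcal{M} \subseteq \I$ for every $\I \in \intrd$ with $\I \models \Prog$. Combined with Proposition~\ref{prop:lattice}, these three facts identify $\mathcal{M}$ with the least Herbrand model $\Mp$.

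Step (ii) is essentially baked into the $\mbox{SQMP}(\simrel,\qdom)$ rule: given $C \equiv A \qgets{d_C} B_1, \ldots, B_k$ in $\Prog$, a $\simrel$-instance $(C', \delta) \in [C]_\simrel$ with $C' \equiv A' \qgets{d_C} B'_1, \ldots, B'_k$, and premises $\at{B'_i}{d_i} \in \mathcal{M}$, each premise is already derivable, and a single inference with $C$ and $(C', \delta)$ yields $\at{A'}{d_C \circ \infi\{\delta, d_1, \ldots, d_k\}} \in \mathcal{M}$, which is exactly the model condition. Step (iii) proceeds by induction on the length $n$ of a derivation $\Prog \sqhlrdn{n} \at{A}{d}$: the final inference names a clause $C \in \Prog$ and an instance $(C', \delta) \in [C]_\simrel$ whose premises (vacuously for $k = 0$, otherwise by the inductive hypothesis) already lie in $\I$, and $\I \models C$ then forces $\at{A}{d} \in \I$.

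Step (i) is the main technical obstacle and hinges on a compositionality lemma for $\simrel$-instances: if $(A_1, \delta_1) \in [A]_\simrel$ and $(A_2, \delta_2) \in [A_1]_\simrel$, then $(A_2, \delta_3) \in [A]_\simrel$ for some $\delta_3 \sqsupseteq \delta_1 \sqcap \delta_2$. I would prove this by merging the two witness atoms and substitutions into a single pair for $A$ and invoking transitivity of the extension of $\simrel$ to terms (Definition~\ref{def:ER}) to lower-bound the resulting degree. The delicate point is the linearization bookkeeping: fresh variables introduced by $\mbox{lin}(A)$ and $\mbox{lin}(A_1)$ interact through the two witness substitutions, and one must show that the composite $\linear{\Set}$-conditions are still satisfied at degree at least $\delta_1 \sqcap \delta_2$. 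Once established for atoms, the analogue for clauses follows directly from Definition~\ref{def:r-instance-clause}, since a $\simrel$-instance of a clause is determined by that of its head together with the substitution applied to the body.

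Given the lemma, step (i) concludes as follows. Suppose $\at{A}{d} \in \mathcal{M}$ is derived by a final $\mbox{SQMP}(\simrel,\qdom)$ step using a clause $C \in \Prog$ with attenuation $d_C$, an instance $((A \qgets{d_C} B'_1, \ldots, B'_k), \delta_1) \in [C]_\simrel$, premises $\at{B'_i}{d_i}$, and $d \sqsubseteq d_C \circ \infi\{\delta_1, d_1, \ldots, d_k\}$. Take any $\at{A^\dagger}{d^\dagger}$ with $(A^\dagger, \delta_2) \in [A]_\simrel$ and $d^\dagger \sqsubseteq d \circ \delta_2$. The compositionality lemma supplies $((A^\dagger \qgets{d_C} B'_1, \ldots, B'_k), \delta_3) \in [C]_\simrel$ with $\delta_3 \sqsupseteq \delta_1 \sqcap \delta_2$. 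A short qualification-domain calculation—using monotonicity, associativity and commutativity of $\circ$, the distributivity axiom~2.(e), and the consequence $e \circ e' \sqsubseteq e'$ of axioms 2.(b)--(d) of Definition~\ref{defQD}—verifies $d^\dagger \sqsubseteq d_C \circ \infi\{\delta_3, d_1, \ldots, d_k\}$. One more $\mbox{SQMP}(\simrel,\qdom)$ application with the same body premises then certifies $\at{A^\dagger}{d^\dagger} \in \mathcal{M}$, closing $\mathcal{M}$ under $\DentailSim$.
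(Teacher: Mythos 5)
Your three-step skeleton (closure under entailment, modelhood, minimality) is the standard one, and steps (ii) and (iii) are essentially sound --- in (iii) you additionally need the closure of $\I$ under $\DentailSim$ to pass from the exact value $d_C \circ \infi\{\delta, d_1,\ldots,d_k\}$ supplied by the model condition down to the derived $d$, but that is immediate since $\I \in \intrd$. The genuine problem is in step (i), in the clause-level form of your compositionality lemma. You claim that from $((A \qgets{d_C} B'_1,\ldots,B'_k),\delta_1) \in [C]_\simrel$ and $(A^\dagger,\delta_2) \in [A]_\simrel$ you obtain $((A^\dagger \qgets{d_C} B'_1,\ldots,B'_k),\delta_3) \in [C]_\simrel$ \emph{with the same body}. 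This is false: by Definition~\ref{def:r-instance-clause} the body of an $\simrel$-instance of $C$ is $B_1\theta,\ldots,B_k\theta$ for the \emph{same} substitution $\theta$ that produces the head $A^\Set\theta$, so changing the head generally forces a change of $\theta$ and hence of the body. Concretely, take $C \equiv p(X) \qgets{d_C} q(X)$ and distinct constants with $\simrel(c,c') \neq \bot$. Then $((p(c) \qgets{d_C} q(c)),\top) \in [C]_\simrel$ and $(p(c'),\simrel(c,c')) \in [p(c)]_\simrel$, yet $p(c') \qgets{d_C} q(c)$ is not an $\simrel$-instance of $C$ for \emph{any} degree, because every $A^\Set$ with $\simrel(p(X),A^\Set) \neq \bot$ has the form $p'(X)$, forcing head and body to share the same instantiation of $X$. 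Your final inference step therefore has no licit clause instance to invoke, and the body premises you propose to reuse are the wrong atoms ($q(c)$ where $q(c')$ is needed).

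The repair is not a one-step patch but a structural induction on the whole $\sqhl{\simrel}{\qdom}$ proof tree: show that if $\Prog \sqhlrdn{n} \at{A}{d}$ and $\at{A}{d} \DentailSim \at{A'}{d'}$, then $\Prog \sqhlrd \at{A'}{d'}$, by exhibiting a new instance of the same program clause whose head is $A'$ (this is where your atom-level compositionality, which is correct and is essentially transitivity of $\DentailSim$, enters) and whose body atoms $B_i\theta''$ are themselves $(\simrel,\qdom)$-entailed by the original premises $\at{B_i\theta}{d_i}$ at degrees that the inductive hypothesis then certifies as derivable. The qualification-domain calculation you sketch (using $e \circ e' \sqsubseteq e'$ and distributivity of $\circ$ over $\sqcap$) is the right one, but it must absorb the similarity losses incurred in the bodies as well as in the head. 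Note also that the paper itself only asserts the proposition and defers to the analogous development for $\qlp{\qdom}$ in \cite{RR08,RR08TR}, so there is no in-paper proof to compare line by line; but as written your step (i) does not go through.
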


The following example serves as an illustration of how the logic $\sqhl{\simrel}{\qdom}$ works over $(\simrel,\U)$ using the example program displayed in Figure  \ref{fig:example}.

\begin{example}\label{ex:least-model}
The following proof tree proves that the atom \emph{\texttt{pet(ly\-nx)}} can be inferred for at least a qualification value of $0.50$ in the $\sqlp{\simrel}{\U}$ program $\Prog$ of Figure \ref{fig:example}. Let's see it:

$$
\displaystyle\frac
{
  \,\displaystyle\frac
  {
    \,\displaystyle\frac
    {}
    {\mbox{\tt\small domestic(lynx)\#0.64}} \mbox{\tiny (4)}
  }
  {\mbox{\tt\small pacific(lynx)\#0.57}} \mbox{\tiny (2)}
  \,\,
  \displaystyle\frac
  {}
  {\mbox{\tt\small intelligent(lynx)\#0.70}} \mbox{\tiny (3)}
}
{\mbox{\tt\small pet(lynx)\#0.50}} \mbox{\tiny (1)}
$$

\noindent where the clauses and qualification values used for each inference step are:

\begin{enumerate}
    \item[(1)] \emph{\texttt{pet(lynx) <-1.0- pacific(lynx),intelligent(lynx)}} is an instance of clause $12$ in $\Prog$ and $0.50 \le 1.0 \times \mathrm{min} \{1.0,$ $0.57,$ $0.70\}$. Note that the first $1.0$ in the minimum is the one which comes from the similarity relation as for this step we are just using a plain instance of clause $12$ in $\Prog$.
    \item[(2)] \emph{\texttt{pacific(lynx) <-0.9- domestic(lynx)}} is a plain instance of clause $10$ in $\Prog$ and $0.57 \le 0.9 \times \mathrm{min} \{1.0, 0.64\}$.
    \item[(3)] \emph{\texttt{intelligent(lynx) <-0.7-}} is clause $9$ in $\Prog$ and $0.70 \le 0.70 \times \mathrm{min} \{1.0\}$.
    \item[(4)] The clause \emph{\texttt{domestic(lynx) <-0.8-}} is an $\simrel$-instance of clause $6$ with a similarity degree of $0.8$ and we have $0.64 \le 0.8 \times \mathrm{min} \{0.8\}$. \qed
\end{enumerate}
\end{example}

\section{Reducing Similarities to Qualifications} \label{Reduction}

\subsection{A Program Transformation} \label{sec:PT}

In this section we prove that any $\sqlp{\simrel}{\qdom}$ program $\Prog$ can be transformed into an equivalent $\qlp{\qdom}$ program which will be denoted by $\trans{\simrel}{\Prog}$. The program transformation is defined as follows:

\begin{definition} \label{def:trans}
Let $\Prog$ be a $\sqlp{\simrel}{\qdom}$ program. We define the transformed program $\trans{\simrel}{\Prog}$ as:
$$
\trans{\simrel}{\Prog} = \Prog_S \cup \Prog_\sim \cup
\Prog_{\mathrm{pay}}
$$
where the auxiliary sets of clauses $\Prog_S$, $\Prog_\sim$, $\Prog_{\mathrm{pay}}$ are defined as:
\begin{itemize}
\item For each clause $(H \qgets{d} \ats{B}) \in \Prog$ and for each $H'$ such that $\simrel(\linear{H},H') \neq \bot$
$$
(H' \qgets{d} pay_{\simrel(\linear{H},H')}, \linear{S}, \ats{B}) \in
\mathcal{P}_S
$$
where $(\linear{H}, \linear{S}) = lin(H)$.
\item $\Prog_\sim = \{X \sim X \qgets{\top} \}$ $\cup$
      $\{ (c(\overline{X}_n) \sim c'(\overline{Y}_n) \qgets{\top} pay_{\simrel(c,c')}, X_1 \sim Y_1, \dots, X_n \sim Y_n)  \mid c, c' \in CS$ of arity $n$, $\simrel(c,c') \neq \bot \}$
\item $\Prog_{\mathrm{pay}} = \{ (pay_w \qgets{w}) \mid $ for each atom $pay_w$ occurring in $\Prog_\sim \cup \Prog_S \}$
\end{itemize}
\end{definition}

Note that the linearization of clause heads in this transformation is motivated by the role of linearized atoms in the $SQHL(\simrel,\qdom)$ logic defined in Subsection \ref{SQLP} to specify the declarative semantics of $\sqlp{\simrel}{\qdom}$ programs. For instance, assume  a $\sqlp{\simrel}{\U}$ program $\Prog$ including the clause $p(X,X) \qgets{1.0}$ and two nullary constructors $c$, $d$ such that $\simrel(c,d) = 0.8$. Then, $SQHL(\simrel,\U)$ supports the derivation $\Prog \sqhlx{\simrel}{\U}{} \at{p(c,d)}{0.8}$, and the transformed program $\trans{\simrel}{\Prog}$ will include the clauses
$$
\begin{array}{lll}
            & p(X, X_1) & \qgets{1.0} pay_{1.0}, X \sim X_1,\\
                            & X \sim X     & \qgets{1.0},\\
                            & c \sim d     & \qgets{1.0} pay_{0.8},\\
                            & pay_{1.0}    & \qgets{1.0}, \\
                            & pay_{0.8}    & \qgets{0.8}
\end{array}
$$
thus enabling the corresponding derivation $\trans{\simrel}{\Prog} \qhlx{\U}{} \at{p(c,d)}{0.8}$ in $QHL(\U)$.

%

In general, $\Prog$ and $\trans{\simrel}{\Prog}$ are semantically equivalent in the sense that  $\Prog \sqhlrd \at{A}{d}\ \Longleftrightarrow\ \trans{\simrel}{\Prog} \qhld \at{A}{d}$ holds for any $\qdom$-qualified atom $\at{A}{d}$, as stated in Theorem \ref{th:equivalence} below. The next technical lemma will be useful for the proof of this theorem.

\begin{lemma} \label{lema:equiv}
Let $\mathcal{P}$ be a $\sqlp{\simrel}{\qdom}$ program and $\trans{\simrel}{\Prog}$ its transformed program according to Definition \ref{def:trans}. Let $t,s$ be two terms in $\Prog$'s signature and $d \in \dqdom$. Then:
\begin{enumerate}
    \item \label{lema:equiv:1} $\trans{\simrel}{\Prog} \qhld \at{(t \sim s)}{d} \Longrightarrow d \sqsubseteq  \simrel(t,s)$
    \item \label{lema:equiv:2} $\simrel(t,s) = d \Longrightarrow \trans{\simrel}{\Prog} \qhld \at{(t \sim s)}{d}$
\end{enumerate}
\end{lemma}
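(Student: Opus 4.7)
Plan for Lemma~\ref{lema:equiv}. The starting observation is that in $\trans{\simrel}{\Prog}$ the only clauses whose head predicate is $\sim$ lie in $\Prog_\sim$, and the only clauses whose head is some $pay_w$ lie in $\Prog_{\mathrm{pay}}$: this is immediate from Definition~\ref{def:trans}, since the heads generated by $\Prog_S$ all use predicate symbols of the original signature of $\Prog$. Consequently, any $\mathrm{QHL}(\qdom)$-derivation ending in a $\sim$- or $pay_w$-atom must have its final $\mathrm{QMP}(\qdom)$ step drawn from $\Prog_\sim$ or $\Prog_{\mathrm{pay}}$, respectively. I also record the trivial fact that any single $\mathrm{QMP}(\qdom)$ step with $pay_w \qgets{w}$ produces a qualification $e \sqsubseteq w \circ \top = w$.

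For part~(1) I would induct on the height of the proof tree for $\at{t \sim s}{d}$. If the last step uses the reflexivity clause $X \sim X \qgets{\top}$, then $t = s$ and $d \sqsubseteq \top = \simrel(t,t)$. Otherwise the last step uses a clause $c(\overline{X}_n) \sim c'(\overline{Y}_n) \qgets{\top} pay_{\simrel(c,c')}, X_1 \sim Y_1, \ldots, X_n \sim Y_n$; hence $t = c(\overline{t}_n)$, $s = c'(\overline{s}_n)$, the premises are $\at{pay_{\simrel(c,c')}}{e}$ and $\at{t_i \sim s_i}{d_i}$ for $1 \le i \le n$, and $d \sqsubseteq \top \circ \infi\{e, d_1, \ldots, d_n\}$. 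The opening observation yields $e \sqsubseteq \simrel(c,c')$, and the induction hypothesis applied to the strictly smaller subtrees yields $d_i \sqsubseteq \simrel(t_i,s_i)$. Monotonicity of $\sqcap$ combined with Definition~\ref{def:ER} then gives $d \sqsubseteq \simrel(c,c') \sqcap \infi\{\simrel(t_i,s_i)\mid 1 \le i \le n\} = \simrel(t,s)$.

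For part~(2) I would induct on the structure of $t$ and use the hypothesis $\simrel(t,s) = d \ne \bot$ to prune cases via admissibility of $\simrel$ (Definition~\ref{defSR}) and the clauses of Definition~\ref{def:ER}. If $t$ or $s$ is a variable, admissibility forces $s = t$ and $d = \top$, so the single step $\at{t \sim t}{\top}$ via $X \sim X \qgets{\top}$ concludes. Otherwise $t = c(\overline{t}_n)$ and $s = c'(\overline{s}_n)$ with matching arity and $\simrel(c,c') \ne \bot$, so the corresponding clause is present in $\Prog_\sim$ and Definition~\ref{def:ER} gives $d = \simrel(c,c') \sqcap \infi\{\simrel(t_i,s_i)\mid 1 \le i \le n\}$, with every conjunct non-$\bot$ because it lies $\sqsupseteq d$. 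The induction hypothesis supplies derivations of $\at{t_i \sim s_i}{\simrel(t_i,s_i)}$, the $\Prog_{\mathrm{pay}}$-clause $pay_{\simrel(c,c')} \qgets{\simrel(c,c')}$ supplies $\at{pay_{\simrel(c,c')}}{\simrel(c,c')}$, and a single $\mathrm{QMP}(\qdom)$ step with the matching $\Prog_\sim$-clause delivers $\at{t \sim s}{d}$ since $\top \circ \infi\{\simrel(c,c'), \simrel(t_1,s_1), \ldots, \simrel(t_n,s_n)\} = d$.

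The only mildly subtle point is that the induction works cleanly only after one has checked (a) that the clauses of $\Prog_S$ cannot contribute to a derivation of a $\sim$- or $pay_w$-atom, pinning the possible last rule to $\Prog_\sim$ or $\Prog_{\mathrm{pay}}$, and (b) that in the constructor case of part~(2) each $\simrel(t_i,s_i)$ remains in $\dqdom$ so the inductive hypothesis applies. Both points follow at once from Definition~\ref{def:trans} and from the fact that $\sqcap$ is monotone, so no deeper argument is required; the proof is otherwise a direct symmetric induction matching the shape of Definition~\ref{def:ER} with the shape of $\Prog_\sim$.
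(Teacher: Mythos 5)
Your proposal is correct and follows essentially the same route as the paper: part~(1) by induction on the size/height of the $\mathrm{QHL}(\qdom)$ proof tree, pinning the last step to a clause of $\Prog_\sim$ and combining the induction hypothesis with Definition~\ref{def:ER}, and part~(2) by structural induction on $t$, assembling the derivation from the matching $\Prog_\sim$ clause, the $pay$ clause, and the inductive derivations for the arguments. The only differences are cosmetic (you fold the constant case into the general constructor case and state explicitly the bookkeeping about which clauses can head $\sim$- and $pay$-atoms, which the paper leaves implicit).
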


\begin{proof} We prove the two items separately.

\begin{enumerate}
    \item Let $T$ be a $\qhl{\qdom}$ proof tree witnessing $$\trans{\simrel}{\Prog} \qhld \at{(t \sim s)}{d}$$ We prove by induction on number of nodes of $T$ that $d \sqsubseteq  \simrel(t,s)$. The basis case, with $T$ consisting of just one node, must correspond to some inference without premises, i.e., a clause with empty body for $\sim$. Checking $P_\sim$ we observe that $X \sim X \qgets{\top}$ is the only possibility. In this case $t$ and $s$ must be the same term and by the reflexivity of $\simrel$ (Def. \ref{defSR}), $\simrel(t,s)= \top$, which means $d \sqsubseteq \simrel(t,s)$ for every $d$. In the inductive step, we consider $T$ with more than one node. Then the inference step at the root of $T$ uses some clause $(c(\overline{X}_n) \sim c'(\overline{X}'_n) \qgets{\top}pay_{\simrel(c,c')}, X_1 \sim X'_1, \dots, X_n \sim X'_n) \in \Prog_\sim$, and must be of the form:
        \[ \frac
            {\quad \at{pay_w}{v}\ \ \at{(t_1 \sim s_1)}{e_1}\ \dots \ \at{(t_n \sim s_n)}{e_n} \quad}
            {\at{c(\overline{t}_n) \sim c'(\overline{s}_n)}{d}}
        \]
        where $w=\simrel(c,c')$, $v\in \qdom$, $v \sqsubseteq w$, $t = c(\overline{t}_n)$, $s=c'(\overline{s}_n)$, and $e_1, \dots, e_n$ s.t. $d \sqsubseteq \top \circ \infi \{v, e_1, \ldots, e_k\}$, i.e., $d \sqsubseteq \infi\{v, e_1, \ldots, e_k\}$. By induction hypothesis $e_i \sqsubseteq \simrel(t_i,s_i)$ for $i=1 \dots n$. Then $d \sqsubseteq \infi \{v, e_1, \ldots, e_n\}$ implies $d \sqsubseteq$ $\infi \{w, \simrel(t_1,s_1),$ $\ldots, \simrel(t_n,s_n)\}$ and hence $d \sqsubseteq \simrel(t,s)$ (Def. \ref{def:ER}, item 3).

    \item If $\simrel(t,s) = d$, $d \neq \bot$, we prove that $\trans{\simrel}{\Prog} \qhld \at{(t \sim s)}{d}$ by  induction on the syntactic structure of $t$. The basis corresponds to the case $t = c$ for some constant $c$, or $t=Y$ for some variable $Y$. If $t=c$ then $s=c'$ for some other constant $c'$. By Definition \ref{def:trans} there is a clause in $\Prog_\sim$ of the form $(c \sim c' \qgets{\top} pay_{d})$. Using this clause and the identity substitution we can write the root inference step of a proof for $\trans{\simrel}{\Prog} \qhld\at{(c \sim c')}{d}$ as follows:
        \[ \frac
            {\at{pay_d}{d}}
            {\quad \at{c \sim c'}{d} \quad}
        \]
        The condition required by the inference rule $\mbox{QMP}(\qdom)$ is in this particular case $d \sqsubseteq \top \circ \infi \{ d \}$, and $\top \circ \infi \{ d \}=d$. Proving the only premise $\at{pay_d}{d}$  in $\qhl{\qdom}$ is direct from its definition. If $t=Y$, with $Y$ a variable, then $s=Y$ and $d=\top$ (otherwise $\simrel(t,s) = \bot$). Then $\trans{\simrel}{\Prog} \qhld \at{(Y \sim Y)}{\top}$ can be proved by using the clause $(X \sim X \qgets{\!\!\!\top}) \in \Prog_\sim$ with substitution $\theta=\{ X \mapsto Y \}$.

In the inductive step, $t$ must be of the form $c(\overline{t}_n)$, with $n \geq 1$, and then $s$ must be of the form $c'(\overline{s}_n)$ (otherwise $\simrel(t,s) = \bot$). From $d=\simrel(t,s) \neq \bot$ (hypotheses of the lemma) and Definition \ref{def:ER} we have that $\simrel(c,c') \neq \bot$. Then, by Definition \ref{def:trans}, there is a clause in $ \Prog_\sim$ of the form: {\small $$c(\overline{X}_n) \sim c'(\overline{Y}_n) \qgets{\top} pay_{\simrel(c,c')}, X_1 \sim Y_1, \dots, X_n \sim Y_n$$ } By using the substitution $\theta=\{X_1 \mapsto t_1, \dots, X_n \mapsto t_n, Y_1 \mapsto s_1, \dots, Y_n \mapsto s_n \}$ we can write the root inference step in $\qhl{\qdom}$ as:
    \[ \frac
        {\quad \at{pay_{\simrel(c,c')}}{\simrel(c,c')}\ (\at{t_i \sim s_i}{\simrel(t_i,s_i)})_{i=1\dots n} \quad}
        {\at{c (\overline{t}_n) \sim c'(\overline{s}_n)}{d}}
    \]

The inference can be applied because the condition $$d \sqsubseteq \top \circ \infi \{\simrel(c,c'), \simrel(t_1,s_1), \dots, \simrel(t_n,s_n) \}$$ reduces to $$d \sqsubseteq \infi \{\simrel(c,c'),  \simrel(t_1,s_1), \dots, \simrel(t_n,s_n) \}$$ which holds by Definition \ref{def:ER}, item 3. Moreover, the premises $\at{t_i \sim s_i}{\simrel(t_i,s_i)}$, $i=1 \dots n$, hold in $\qhl{\qdom}$ due to the inductive hypotheses, and proving $${\at{pay_{\simrel(c,c')}}{\simrel(c,c')}}$$ is straightforward from its definition.\qedhere
\end{enumerate}
\end{proof}

Now we can prove the equivalence between semantic inferences in $\qhl{\qdom}$ w.r.t. $\mathcal{P}$ and semantic inferences in $\sqhl{\simrel}{\qdom}$ w.r.t. $\trans{\simrel}{\Prog}$.

\begin{theorem} \label{th:equivalence}
Let $\mathcal{P}$ be a $\sqlp{\simrel}{\qdom}$ program,  $A$ an atom in $\Prog$'s signature and $d \in \dqdom$. Then:
$$
\Prog \sqhlrd \at{A}{d}\ \Longleftrightarrow\ \trans{\simrel}{\Prog}
\qhld \at{A}{d} \enspace .
$$
\end{theorem}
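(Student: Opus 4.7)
The plan is to prove both implications by induction on the height of the underlying proof tree, using Lemma~\ref{lema:equiv} as the bridge between $\sqhl{\simrel}{\qdom}$-derivations over $\Prog$ and $\qhl{\qdom}$-derivations over $\trans{\simrel}{\Prog}$ at precisely the points where similarity degrees are generated or consumed.

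For $(\Longrightarrow)$, at the root of an $\sqhl{\simrel}{\qdom}$ proof of $\at{A}{d}$ the last step applies $\mbox{SQMP}(\simrel,\qdom)$ with some $(C',\delta) \in [C]_\simrel$ where $C \equiv H \qgets{d_0} \ats{B} \in \Prog$ and, by Definition~\ref{def:r-instance-clause}, $C' \equiv A^\Set\theta \qgets{d_0} B_1\theta, \ldots, B_k\theta$ for suitable $A^\Set$ and $\theta$ satisfying $\delta = \simrel(\linear{H}, A^\Set) \sqcap \infi\{\simrel(X_i\theta, X_j\theta) \mid (X_i \sim X_j) \in \linear{\Set}\} \neq \bot$. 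Since $\simrel(\linear{H}, A^\Set) \neq \bot$, Definition~\ref{def:trans} guarantees that $\trans{\simrel}{\Prog}$ contains the clause $A^\Set \qgets{d_0} pay_{\simrel(\linear{H}, A^\Set)}, \linear{\Set}, \ats{B}$. I would apply $\mbox{QMP}(\qdom)$ to this clause with substitution $\theta$: the $pay$ premise is discharged by the corresponding $\Prog_{\mathrm{pay}}$ clause, each similarity premise $\at{X_i\theta \sim X_j\theta}{\simrel(X_i\theta,X_j\theta)}$ is derivable by Lemma~\ref{lema:equiv}(2), and the body premises $\at{B_i\theta}{d_i}$ come from the induction hypothesis applied to the immediate subtrees. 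Distributivity of $\circ$ over $\sqcap$ (axiom~2(e) of Definition~\ref{defQD}) then turns the $\mbox{SQMP}$ side condition $d \sqsubseteq d_0 \circ \infi\{\delta, d_1, \ldots, d_k\}$ into the one required for $\mbox{QMP}$.

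For $(\Longleftarrow)$, at the root of a $\qhl{\qdom}$ proof of $\at{A}{d}$ in $\trans{\simrel}{\Prog}$, the fact that $A$ lives in $\Prog$'s signature (so involves neither $\sim$ nor $pay$) forces the last inference to use a clause from $\Prog_S$, say $(H' \qgets{d_0} pay_w, \linear{\Set}, \ats{B})$ with $w = \simrel(\linear{H}, H') \neq \bot$ coming from some $C \equiv H \qgets{d_0} \ats{B} \in \Prog$. Let $\sigma$ be the root substitution. The immediate premises split into three groups: a $pay_w$ premise at some $v \sqsubseteq w$; similarity premises $\at{X_i\sigma \sim X_j\sigma}{e_{ij}}$ for each $(X_i \sim X_j) \in \linear{\Set}$, for which Lemma~\ref{lema:equiv}(1) gives $e_{ij} \sqsubseteq \simrel(X_i\sigma, X_j\sigma)$; and body premises $\at{B_i\sigma}{d_i}$ that are $\sqhlrd$-derivable from $\Prog$ by the induction hypothesis. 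Setting $\delta = \simrel(\linear{H}, H') \sqcap \infi\{\simrel(X_i\sigma, X_j\sigma) \mid (X_i \sim X_j) \in \linear{\Set}\}$, Definition~\ref{def:r-instance-clause} yields $((H'\sigma \qgets{d_0} \ats{B}\sigma), \delta) \in [C]_\simrel$, while monotonicity of $\circ$ together with distributivity over $\sqcap$ gives $d \sqsubseteq d_0 \circ \infi\{\delta, d_1, \ldots, d_k\}$, licensing an $\mbox{SQMP}(\simrel,\qdom)$ step that delivers $\at{A}{d}$ from the inductive body derivations.

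The main obstacle I expect is the bookkeeping around substitutions and the way linearization interacts with them: one must verify that the substitution $\sigma$ extracted from the transformed proof coincides, on the shared variables of $\linear{H}$ and $\ats{B}$, with a bona fide witness $\theta$ for $\simrel$-instancing in the sense of Definition~\ref{def:r-instance-clause}, and that the non-degeneracy condition $\delta \neq \bot$ matches precisely the existence conditions $\simrel(\linear{H}, H') \neq \bot$ and $\simrel(c,c') \neq \bot$ that guarantee the presence of the relevant clauses in $\Prog_S$ and $\Prog_\sim$. A related subtlety is making sure Lemma~\ref{lema:equiv} is used in the right direction in each half (part~(2) when \emph{synthesizing} similarity premises in $\trans{\simrel}{\Prog}$, part~(1) when \emph{reading off} similarity degrees from an existing derivation), so that the $\sqsubseteq$-inequalities accumulated through several inference steps compose correctly through $\circ$.
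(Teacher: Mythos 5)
Your overall strategy is exactly the paper's: both implications by induction on the size of the proof tree, with Lemma~\ref{lema:equiv}(2) used to synthesize the $\sim$-premises in the forward direction and Lemma~\ref{lema:equiv}(1) to bound the similarity degrees read off an existing derivation in the backward direction, and with the clause correspondence between $[C]_\simrel$-instances and $\Prog_S$-clauses mediating both. (One cosmetic point: in the forward direction no distributivity is needed — since $\delta = w \sqcap \infi\{e_1,\ldots,e_m\}$, the set whose infimum feeds $\circ$ in the $\mbox{QMP}(\qdom)$ side condition is literally the same as in the $\mbox{SQMP}(\simrel,\qdom)$ one, so associativity of $\sqcap$ suffices.)

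There is, however, one genuine gap in the backward direction, precisely at the point you flag as a ``subtlety'' but then mis-diagnose. To invoke Definition~\ref{def:r-instance-clause} you must show $\delta = w \sqcap \infi\{\simrel(X_i\sigma,X_j\sigma) \mid (X_i \sim X_j) \in \linear{\Set}\} \neq \bot$, and this does \emph{not} follow from the existence conditions $\simrel(\linear{H},H') \neq \bot$ and $\simrel(c,c') \neq \bot$ that put the relevant clauses into $\Prog_S$ and $\Prog_\sim$: in a general qualification domain the glb of non-bottom elements can be $\bot$ (e.g.\ in $\U \times \W$, $(0.5,\infty) \sqcap (0,3) = (0,\infty) = \bot$ although neither operand is $\bot$). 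The paper closes this by a different route: from $d \sqsubseteq e \circ \infi\{\delta, d_1,\ldots,d_k\}$ and axiom 2.(e) of Definition~\ref{defQD} one gets $d \sqsubseteq (e \circ \delta) \sqcap (e \circ \infi\{d_1,\ldots,d_k\})$, so $d \neq \bot$ (recall $d \in \dqdom$) forces $e \circ \delta \neq \bot$, and axiom 2.(c) then yields $\delta \neq \bot$. This is where the distributivity of $\circ$ over $\sqcap$ is actually needed, and without this argument your $\mbox{SQMP}(\simrel,\qdom)$ step at the root is not licensed.
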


\begin{proof}
Let $T$ be a $\sqhl{\simrel}{\qdom}$ proof tree for some annotated atom $\at{A}{d}$ in $\Prog$'s signature witnessing $\Prog \sqhlrd \at{A}{d}$. We prove that $\trans{\simrel}{\Prog} \qhld \at{A}{d}$ by induction on the number of nodes of $T$.

The inference step at the root of $T$  must be of the form
\[
    \frac
    {\quad \at{B'_{1}}{d_{1}} \quad \cdots \quad \at{B'_{k}}{d_{k}} \quad}
    {\at{A}{d}} \quad \mbox{(1)} \enspace
\]
with $((A \qgets{e} B'_1, \ldots, B'_k), \delta) \in [C]_\simrel$ for some clause $C \equiv (H \qgets{e} B_1, \ldots, B_k)  \in \Prog$ (observe that the case $k=0$ corresponds to the induction basis). By Definition \ref{def:r-instance-clause}, $A = H'\theta$, $B'_i= B_i\theta$ for some substitution $\theta$ and atom $H'$ such that $\delta = \simrel(\linear{H}, H') \sqcap \infi\{\simrel(X_i\theta,$ $X_j\theta) \mid (X_i \sim X_j) \in \linear{\Set}\} \neq \bot$, with $\mbox{lin}(H) = (\linear{H}, \linear{\Set})$. This means in particular that $w = \simrel(\linear{H}, H') \neq \bot$, which by Definition \ref{def:trans} implies that  there is a clause $C'$ in $\trans{\simrel}{\Prog}$ of the form $C'\equiv (H' \qgets{e} pay_w, \linear{S}, B_1, \dots B_k)$. Then the root inference step of the deduction proving ${\Prog} \qhld \at{A}{d}$ will use the inference rule $\mbox{QMP}(\qdom)$ with $C'$ and substitution $\theta$ (such that $H'\theta = A$) as follows:
\[ \frac
    {\at{pay_w\theta}{w}\quad (\at{(u_i \sim v_i)\theta}{e_i})_{1\leq i \leq m}\quad \at{B'_1}{d_1} \cdots \at{B'_k}{d_k}}
    {\at{A}{d} }\quad \mbox{(2)}
\]
where $\linear{S} = \{ u_1 \sim v_1, \dots, u_m \sim v_m \}$, and $e_i = \simrel(u_i\theta, v_i\theta)$ for $i=1 \dots m$.

Next we check that the premises can be proved from $\trans{\simrel}{\Prog}$ in $\qhl{\qdom}$:
\begin{itemize}
    \item $pay_w\theta = pay_w$, since $pay_w$ is a nullary predicate for every $w$. Therefore $\trans{\simrel}{\Prog} \qhld \at{pay_w}{w}$ is immediate from the definition of $pay_w$ in Definition \ref{def:trans}.

    \item For each $1 \le i \le m$, we observe that $\simrel(u_i\theta, v_i\theta) \neq \bot$ because $\delta \neq \bot$ has been computed above as the infimum of a set including $\simrel(u_i\theta, v_i\theta)$ among its members. Then $\trans{\simrel}{\Prog} \qhld (u_i \sim v_i)\theta$ holds by Lemma \ref{lema:equiv}, item \ref{lema:equiv:2}.

    \item For each $1 \le i \le k$, (1) shows that $\Prog \sqhlrd \at{B'_i}{d_i}$ with a proof tree having less nodes that $T$. Therefore, $\trans{\simrel}{\Prog} \qhld \at{B'_i}{d_i}$ by induction hypothesis.
\end{itemize}

In order to perform the inference step (2), the QMP($\qdom$) inference rule also requires that $d \sqsubseteq e \circ \infi \{w, e_1 \dots, e_m, \, d_1, \ldots, d_k\}$. This follows from the associativity of $\sqcap$ since:
\begin{itemize}
    \item As defined above, $\delta = \simrel(\linear{H}, H')$ $\sqcap~ \infi\{\simrel(X_i\theta, X_j\theta) \mid (X_i \sim X_j) \in \linear{\Set}\}$, i.e. $\delta = w$ $\sqcap~ \infi\{e_1 \dots e_m \}$.
    \item By the $\mbox{SQMP}(\simrel,\qdom)$  inference (1) we know that $d \sqsubseteq e \circ \infi \{\delta, d_1, \ldots, d_k\}$.
\end{itemize}
\medskip

Let $T$ be a $\qhl{\qdom}$ proof tree witnessing $\trans{\simrel}{\Prog}$ $\qhld$ $\at{A}{d}$ for some atom $A$ in $\Prog$'s signature.  We prove by induction on the number of nodes of $T$ that ${\Prog} \sqhlrd \at{A}{d}$.

Since $A$ is in $\Prog$'s signature, the clause employed at the inference step at the root of $T$ must be in the set $\Prog_S$ of Definition \ref{def:trans}, and the inference step at the root of $T$ have of the form of the inference (2) above. Hence this clause must have been constructed from a clause $C \equiv (H \qgets{e} B_1, \ldots, B_k) \in \Prog$ and some atom $H'$ such that $A = H'\theta$ and $\simrel(\linear{H},H') \neq \bot$, where $lin(H)=(\linear{H}, \linear{S})$.

Then we can use $C$ and $\theta$ to prove ${\Prog} \sqhlrd \at{A}{d}$ by  a $\mbox{SQMP}(\simrel,\qdom)$ inference like (1) using the $\simrel$-instance $C' \equiv A \qgets{e} B'_1, \dots, B'_k$ of $C$. The premises can be proved in $\sqhl{\simrel}{\qdom}$ by induction hypotheses, since all of them are also premises in (2). Finally, we must check that the conditions required by (1) hold:  $(C', \delta) \in [C]_\simrel$ for some $\delta \in \qdom$, $\delta \neq \bot$ s.t. $d \sqsubseteq e \circ \infi \{\delta, d_1, \ldots, d_k\}$. This is true for $\delta = \infi \{w, e'_1, \dots, e'_n\}$, with $e'_i=\simrel(u_i\theta,v_i\theta)$ for $i=1 \dots m$. Observe that in the premises of (2) we have $\qhl{\qdom}$ proofs of $\at{u_i\theta \sim v_i\theta}{e_i}$ for $i=1 \dots m$. Therefore $e_i \sqsubseteq e'_i$, by Lemma \ref{lema:equiv}, item \ref{lema:equiv:1}. Then
$$
\begin{array}{lll}
d & \sqsubseteq e \circ \infi \{w, e_1 \dots, e_m, \, d_1, \ldots,
d_k\} & \mathrm{(by (2))} \\
  & \sqsubseteq e \circ \infi \{w, e'_1 \dots, e'_m, \, d_1, \ldots,
d_k\} & (e_i \sqsubseteq e'_i)\\
  & = e \circ \infi \{\delta,d_1, \ldots, d_k\}
\end{array}
$$

We must still prove that $\delta \neq \bot$. Observe that by the distributivity of $\circ$ w.r.t. $\sqcap$ (Def. \ref{defQD}, axiom 2.(e)): $$e \circ \infi \{\delta, d_1, \ldots, d_k\} = (e \circ \delta) \sqcap (e \circ \infi \{d_1, \ldots, d_k\}) \enspace .$$

Therefore $$ d \sqsubseteq (e \circ \delta) \sqcap (e \circ \infi \{d_1, \ldots, d_k\})$$ and from $d \neq \bot$ we obtain $(e \circ \delta) \neq \bot$ which implies $\delta \neq \bot$ due to axiom 2.(c) in Definition \ref{defQD}. This completes the proof.
\end{proof}

\subsection{ Comparison to Related Approaches} \label{sec:RA}

Other program transformations have been proposed in the literature with the aim of supporting $\simrel$-based reasoning while avoiding explicit $\simrel$-based unification. Here we draw some comparisons between the program transformation $\trans{\simrel}{\Prog}$ presented in the previous subsection, the program transformations $\extended{\lambda}{\Prog}$ and $\abstracted{\Prog}{\lambda}$ proposed  in \cite{Ses02}, and the program transformation $\MAProg$ proposed in \cite{MOV04}. These three transformations are applied to a classical logic program $\Prog$ w.r.t. a fuzzy similarity relation $\simrel$ over symbols in the program's signature. Both $\extended{\lambda}{\Prog}$ and $\abstracted{\Prog}{\lambda}$ are classical logic programs to be executed by $SLD$ resolution, and their construction depends on a fixed similarity degree $\lambda \in (0,1]$. On the other hand, $\MAProg$ is a multi-adjoint logic program over a particular multi-adjoint lattice $\GL$, providing the uncertain truth values in the interval $[0,1]$ and two operators for conjunction and disjunction in the sense of G\"{o}del's fuzzy logic (see \cite{Voj01} for technical details). As in the case of our own transformation $\trans{\simrel}{\Prog}$, the construction of $\MAProg$  does not depend on any fixed similarity degree. The transformation $\trans{\simrel}{\Prog}$ proposed in this paper is more general in that it can be applied to an arbitrary $\sqlp{\simrel}{\qdom}$ program
$\Prog$, yielding a $\qlp{\qdom}$ program $\trans{\simrel}{\Prog}$ whose least Herbrand model is the same  as that of $\Prog$.

We will restrict our comparisons  to the case that $\Prog$ is chosen as a similarity-based logic program in the sense of \cite{Ses02}. As an illustrative example, consider  the simple logic program $\Prog$ consisting of the following four clauses:
\begin{itemize}
    \item $C_r:\,\, r(X,Y) \gets p(X),\, q(Y),\, s(X,Y)$
    \item $C_p:\,\, p(c(U)) \gets$
    \item $C_q:\,\, q(d(V)) \gets$
    \item $C_s:\,\, s(Z,Z) \gets$
\end{itemize}

Assume an admissible similarity relation defined by $\simrel(c,d) = 0.9$ and consider the goal $G:\,\, \gets r(X,Y)$ for $\Prog$. Then, $\simrel$-based $SLD$-resolution as defined in \cite{Ses02} computes the answer substitution $\sigma = \{X \mapsto c(U),\, Y \mapsto d(U)\}$ with similarity degree $0.9$. This computation succeeds because $\simrel$-based unification can compute the $m.g.u.$ $\{Z \mapsto c(U),\, V \mapsto U\}$ with similarity degree $0.9$ to unify the two atoms $s(c(U),d(V))$ and $s(Z,Z)$. Let us now examine the behavior of the the transformed programs $\extended{0.9}{\Prog}$, $\abstracted{\Prog}{0.9}$, $\trans{\simrel}{\Prog}$ and $\MAProg$ and when working to emulate this computation without  explicit use of a $\simrel$-based unification procedure.

\begin{enumerate}

    \item $\extended{0.9}{\Prog}$ is defined in \cite{Ses02} as the set of all clauses $C'$ such that $\simrel(C,C') \geq 0.9$ for some clause $C \in \Prog$. In this case $\extended{0.9}{\Prog}$ includes the four clauses of $\Prog$ and the two additional clauses $p(d(U)) \gets$ and $q(c(V)) \gets$, derived by similarity from $C_p$ and $C_q$, respectively. Solving $G$ w.r.t. $\extended{0.9}{\Prog}$ by means of classical $SLD$ resolution produces two possible answer substitutions, namely $\sigma_1 = \{X \mapsto c(U),\, Y \mapsto c(U)\}$ and $\sigma_2 = \{X \mapsto d(U),\, Y \mapsto d(U)\}$. They are both similar to $\sigma$ to a degree greater or equal than $0.9$, but none of them is $\sigma$ itself, contrary to the claim in Proposition 7.1 (i) from \cite{Ses02}. Therefore, this Proposition seems to hold only in a somewhat weaker sense than the statement in \cite{Ses02}. This problem is due to the possible non-linearity of a clause's head, which is properly taken into account by our transformation $\trans{\simrel}{\Prog}$.

    \item According to \cite{Ses02}, $\abstracted{\Prog}{0.9}$ is computed from $\Prog$ by replacing all the constructor and predicate symbols by new symbols that represent the equivalence classes of the original ones modulo $\simrel$-similarity to a degree greater or equal than $0.9$. In our example these classes are $\{r\}$, $\{p\}$, $\{q\}$, $\{s\}$ and $\{c,d\}$, that can be represented by the symbols $r$, $p$, $q$, $s$ an $e$, respectively. Then, $\abstracted{\Prog}{0.9}$ replaces the two clauses $C_p$ and $C_q$ by $p(e(U)) \gets$ and $q(e(V)) \gets$, respectively, leaving the other two clauses unchanged. Solving $G$ w.r.t. $\abstracted{\Prog}{0.9}$ by means of classical $SLD$ resolution produces the answer substitution $\sigma' = \{X \mapsto e(U),\, Y \mapsto e(U)\}$, which corresponds to $\sigma$ modulo the replacement of the symbols in the original program by their equivalence classes. This is consistent with the claims in Proposition 7.2 from \cite{Ses02}.

    \item Note that $\Prog$ can be trivially converted into a semantically equivalent a $\sqlp{\simrel}{\U}$ program, just by replacing each occurrence of the implication sign $\gets$ in $\Prog$'s clauses by $\qgets{1.0}$. Then $\trans{\simrel}{\Prog}$ can be built as a $\qlp{\U}$ program by the method explained in Subsection \ref{sec:PT}. It includes three clauses corresponding to $C_r$, $C_p$ and $C_q$ of $\Prog$ plus the following three new clauses:
        \begin{itemize}
            \item $C_p':\,\, p(d(U)) \qgets{1.0} pay_{0.9}$
            \item $C_q':\,\, q(c(V)) \qgets{1.0} pay_{0.9}$
            \item $C_s':\,\, s(Z_1,Z_2) \qgets{1.0} Z_1 \sim Z_2$
        \end{itemize}
        where $C_p'$ resp. $C_q'$ come from replacing the linear heads of $C_p$ resp. $C_q$ by similar heads, and $C_s$ comes from linearizing the head of $C_s$, which allows no replacements by similarity. $\trans{\simrel}{\Prog}$ includes also the proper clauses for $\Prog_\sim$ and $\Prog_{\mathrm{pay}}$, in particular the following three ones:
    \begin{itemize}
        \item $I:\,\, X \sim X \qgets{1.0}$
        \item $S:\,\, c(X_1) \sim d(Y_1) \qgets{1.0} pay_{0.9}, X_1 \sim Y_1$
        \item $P:\,\, pay_{0.9} \qgets{0.9}$
    \end{itemize}
    Solving goal $G$ w.r.t. $\trans{\simrel}{\Prog}$ by means of the $\U$-qualified $SLD$ resolution procedure described in \cite{RR08} can compute the answer substitution $\sigma$ with qualification degree $0.9$. More precisely, the initial goal can be stated as $r(X,Y)\#W \sep W \geq 0.9$, and the computed answer is $(\sigma, \{W \mapsto 0.9\})$. The computation emulates $\simrel$-based unification of $s(c(U),c(V))$ and $s(Z,Z)$ to the similarity degree $0.9$ by solving $s(c(U),c(V))$ with the clauses $C_s'$, $I$, $S$ and $P$.

    \item The semantics of the \MALP framework depending on the chosen multi-adjoint lattice is presented in \cite{MOV04}. A comparison
with the semantics of the $\qlp{\qdom}$ scheme (see \cite{RR08} and Subsection \ref{QLP} above) shows that \MALP programs over the multi-adjoint lattice $\GL$ behave as $\qlp{\U'}$ programs, where $\U'$ is the quasi qualification domain analogous to $\U$ introduced at the end of Subsection \ref{QD} above. For this reason, we can think of the transformed program $\MAProg$ as presented with he syntax of a $\qlp{\U'}$ program. The original program $\Prog$ can also be  written as a $\qlp{\U'}$ program just by replacing each the implication sign $\gets$ occurring  in $\Prog$ by $\qgets{1.0}$. As explained in \cite{MOV04}, $\MAProg$ is built by extending $\Prog$ with clauses for  a new binary predicate $\sim$ intended to emulate the behaviour of $\simrel$-based unification between terms. In our example, $\MAProg$ will include (among others) the following clause
for $\sim$:
    \begin{itemize}
        \item $S':\,\, c(X_1) \sim d(Y_1) \qgets{0.9} X_1 \sim Y_1$
    \end{itemize}
In comparison to the clause $S$  in $\trans{\simrel}{\Prog}$, clause $S'$ needs no call to a $pay_{0.9}$ predicate at its body, because the similarity degree $0.9 = \simrel(c,d)$ can be attached directly to the clause's implication. This difference corresponds to the different interpretations of $\circ$, which behaves as $\times$ in $\U$ and as $min$ in $\U'$.

Moreover, $\MAProg$ is defined to include a clause of the following form for each pair of $n$-ary predicate symbols $pd$ and $pd'$ such that $\simrel(pd,pd') \neq 0$:
\begin{itemize}
    \item $C_{pd, pd'}:\,\, pd(Y_1, \ldots, Y_n)  \qgets{\simrel(pd,pd')} \\ pd'(X_1, \ldots, X_n), X_1 \sim Y_1,  \ldots, X_n \sim Y_n$
\end{itemize}
In our simple example, all the clauses of this form correspond to the trivial case where $pd$ and $pd'$ are the same predicate symbol and $\simrel(pd,pd') = 1.0$. Solving goal $G$ w.r.t.$\trans{\simrel}{\Prog}$ by means of the procedural semantics described in Section 4 of \cite{MOV04}
can compute the answer substitution $\sigma$ to the similarity degree $0.9$. More generally, Theorem 24 in \cite{MOV04} claims that for any choice of $\Prog$, $\MAProg$ can emulate any successful computation performed by $\Prog$ using $\simrel$-based $SLD$ resolution.
\end{enumerate}

In conclusion, the main difference between $\trans{\simrel}{\Prog}$  and $\MAProg$  pertains to the techniques used by both program transformations
in order to emulate the effect of replacing the head of a clause in the original program by a similar one. $\MAProg$ always relies on the clauses of the form $C_{pd, pd'}$ {\em and}  the clauses for $\sim$, while $\trans{\simrel}{\Prog}$ can avoid to use the clauses for $\sim$ as long as all the clauses involved in the computation have linear heads. In comparison to the two transformations $\extended{\lambda}{\Prog}$ and $\abstracted{\Prog}{\lambda}$, our transformation $\trans{\simrel}{\Prog}$ does not depend on a fixed similarity degree $\lambda$ and does not replace the atoms in clause bodies by similar ones.

\subsection{A Goal Solving Example } \label{sec:GS}

\begin{figure}[ht]
\begin{center}
\tt
\begin{tabular}{|r@{\hspace{0.2cm}}l|}
\hline
&\\
\scriptsize  1 & wild(lynx) <-0.9- $pay_{1.0}$\\
\scriptsize  2 & wild(boar) <-0.9- $pay_{1.0}$\\
\scriptsize  3 & wild(snake) <-1.0- $pay_{1.0}$\\
\scriptsize  4 & wild(cat) <-0.9- $pay_{0.8}$\\
\scriptsize  5 & wild(pig) <-0.9- $pay_{0.7}$\\
&\\
\scriptsize  6 & farm(cow) <-1.0- $pay_{1.0}$\\
\scriptsize  7 & farm(pig) <-1.0- $pay_{1.0}$\\
\scriptsize  8 & farm(boar) <-1.0- $pay_{0.7}$ \\
\scriptsize  9 & farm(cat) <-0.8- $pay_{0.3}$ \\
\scriptsize 10 & farm(lynx) <-0.8- $pay_{0.3}$ \\
\scriptsize 11 & farm(snake) <-0.4- $pay_{0.3}$ \\
&\\
\scriptsize 12 & domestic(cat) <-0.8- $pay_{1.0}$ \\
\scriptsize 13 & domestic(snake) <-0.4- $pay_{1.0}$ \\
\scriptsize 14 & domestic(lynx) <-0.8- $pay_{0.8}$ \\
\scriptsize 15 & domestic(cow) <-1.0- $pay_{0.3}$ \\
\scriptsize 16 & domestic(pig) <-1.0- $pay_{0.3}$ \\
\scriptsize 17 & domestic(boar) <-1.0- $pay_{0.3}$ \\
&\\
\scriptsize 18 & intelligent(A) <-0.9- $pay_{1.0}$,domestic(A) \\
\scriptsize 19 & intelligent(lynx) <-0.7- $pay_{1.0}$ \\
\scriptsize 20 & intelligent(cat) <-0.7- $pay_{0.8}$ \\
&\\
\scriptsize 21 & pacific(A) <-0.9- $pay_{1.0}$,domestic(A) \\
\scriptsize 22 & pacific(A) <-0.7- $pay_{1.0}$,farm(A) \\
&\\
\scriptsize 23 & pet(A) <-1.0- $pay_{1.0}$,pacific(A),intelligent(A) \\
&\\
\scriptsize 24 & $pay_{1.0}$ <-1.0- \\
\scriptsize 25 & $pay_{0.8}$ <-0.8- \\
\scriptsize 26 & $pay_{0.7}$ <-0.7- \\
\scriptsize 27 & $pay_{0.3}$ <-0.3- \\
& \\
\hline
\end{tabular}
\end{center}
\caption{Example of transformed program. (Note: no clauses for $\sim$ are needed because the original program was left-linear). \label{fig:exampletrans}}
\end{figure}

In order to illustrate the use of  the transformed program $\trans{\simrel}{\Prog}$ for golving goals w.r.t. the original program $\Prog$, we consider the case where $\Prog$ is the $\sqlp{\simrel}{\U}$ program displayed in Figure \ref{fig:example}. The transformed program $\trans{\simrel}{\Prog}$ obtained by applying Definition \ref{def:trans} is shown in Figure \ref{fig:exampletrans}.  The following observations are useful to understand how the transformation has worked in this simple case:

\begin{itemize}
    \item The value $\top$ in the domain $\U$ corresponds to the real number $1$ and hence by reflexivity $\simrel(A,A) = 1$ for any atom in the signature of the program. Therefore, and as a consequence of Definition \ref{def:trans}, every clause in the original program gives rise to a clause in the transformed program with the same head and with the same body except for a new, first atom $pay_{1.0}$. For instance, clauses 1, 2 and 3 in Figure \ref{fig:exampletrans} correspond to the same clause numbers in Figure \ref{fig:example}.

    \item Apart of the clauses corresponding directly to the original clauses, the program of Figure \ref{fig:exampletrans} contains new clauses obtained by  similarity with some clause heads in the original program. For instance,  lines 4 and 5  are obtained by similarity with clauses at lines 1 and 2 in the original program, respectively. The  subindexes at literal $pay$ correspond to $\simrel(\mathtt{lynx}, \mathtt{cat})=0.8$, $\simrel(\mathtt{boar}, \mathtt{pig}) = 0.7$, respectively.

    \item Analogously, for instance the clause at line 10 (with head {\tt farm(lynx)}) is obtained by head-similarity with the clause of line 6 in the $\sqlp{\simrel}{\U}$ program (head {\tt domestic(cat)}), and the subindex at $pay$ is obtained from
 $$
 \begin{array}{ll}
 \simrel(\mathtt{domestic(cat)}, \mathtt{farm(lynx)}) &= \\
 \simrel(\mathtt{domestic}, \mathtt{farm}) \sqcap \simrel(\mathtt{cat}, \mathtt{lynx}) &= \\
                                    0.3 \sqcap 0.8 &= \\
                                    0.3
 \end{array}$$

    \item There is no clause for predicate $\sim$ since all the heads in the original program were already linear and therefore $\Prog_\sim$ can be left empty in practice.

    \item The clauses  for $pay$ correspond to the fragment $\Prog_{\mathrm{pay}}$ in Definition \ref{def:trans}.
\end{itemize}

In the rest of this subsection, we will show an execution for the goal \texttt{pet(A)\#W | W >= 0.50} over the program $\trans{\simrel}{\Prog}$ (see Figure \ref{fig:exampletrans}) with the aim of obtaining all those animals that could be considered a \texttt{pet} for at least a qualification value of $0.50$.

We are trying this execution in the prototype developed along with \cite{RR08} for the instances $\qlp{\U}$ and $\qlp{\W}$. Although this prototype hasn't been released as an integrated part of $\toy$, you can download\footnote{Available at: \texttt{http://gpd.sip.ucm.es/cromdia/qlpd}. There you will also find specific instructions on how to install and run it as well as text files with the program examples tried in here.} the prototype to try this execution. Please notice that the prototype does not automatically do the translation process from a given $\sqlp{\simrel}{\qdom}$ program $\Prog$ to its transformed program $\trans{\simrel}{\Prog}$, because it was developed mainly for \cite{RR08}. Therefore, the transformed program shown in Figure \ref{fig:exampletrans} has been computed manually.

We will start running $\toy$ and loading the $\qlp{\U}$ instance with the command \texttt{/qlp(u)}:

\begin{verbatim}
Toy> /qlp(u)
\end{verbatim}

\noindent this will have the effect of loading the \emph{Real Domain Constraints library} and the $\qlp{\U}$ library into the system, the prompt \texttt{QLP(U)>} will appear. Now we have to compile our example program (assume we have it in a text file called \texttt{animals.qlp} in \texttt{C:/examples/}) with the command \texttt{/qlptotoy} (this command will behave differently based on the actual instance loaded).

\begin{verbatim}
QLP(U)> /qlptotoy(c:/examples/animals)
\end{verbatim}

Note that we didn't write the extension of the file because it \emph{must} be \texttt{.qlp}. This will create the file \texttt{animals.toy} in the same directory as our former file. And this one will be an actual $\toy$ program. We run the program with \texttt{/run(c:/examples/animals)} (again without the extension --although this time we are assuming \texttt{.toy} as extension--) and we should get the following message:

\begin{verbatim}
PROCESS COMPLETE
\end{verbatim}

And finally we are set to launch our goal with the command \texttt{/qlpgoal}. The solutions found for this program and goal are:

\begin{verbatim}
QLP(U)> /qlpgoal(pet(A)#W | W>=0.50)
      { A -> cat,
        W -> 0.5599999999999999 }

sol.1, more solutions (y/n/d/a) [y]?
      { A -> cat,
        W -> 0.7200000000000001 }

sol.2, more solutions (y/n/d/a) [y]?
      { A -> lynx,
        W -> 0.5760000000000002 }

sol.3, more solutions (y/n/d/a) [y]?
      { A -> lynx,
        W -> 0.5760000000000002 }

sol.4, more solutions (y/n/d/a) [y]?
      no
\end{verbatim}

At this point and if you remember the inference we did in Example \ref{ex:least-model} for \texttt{pet(lynx)\#0.50}, we have found a better solution (as you can see there are two solutions for \texttt{lynx}, and this is due to the two different ways of proving \texttt{intelligent(lynx)}: \texttt{intelligent(lynx)\#0.7} using clause 19, and \texttt{intelligent (lynx)\#0.576} using clauses 18 and 14.

\section{Conclusions} \label{Conclusions}

Similarity-based $LP$ has been proposed in \cite{Ses02} and related works to enhance the $LP$ paradigm with a kind of approximate reasoning which supports flexible information retrieval applications, as argued in \cite{LSS04,MOV04}. This approach keeps the syntax for program  clauses as in classical $LP$, and supports uncertain reasoning by using a fuzzy similarity relation $\simrel$ between symbols in the program's signature. We have shown that similarity-based $LP$ as presented in \cite{Ses02} can be reduced to Qualified $LP$ in the $\qlp{\qdom}$ scheme introduced in \cite{RR08}, which supports  logic programming with attenuated program clauses over a parametrically given domain $\qdom$  whose elements  qualify logical assertions by measuring their closeness to various users' expectations. Using  generalized similarity relations taking values in the carrier set of an
arbitrarily  given qualification domain $\qdom$,  we  have extended $\qlp{\qdom}$ to a more expressive scheme $\sqlp{\simrel}{\qdom}$ with two parameters, for programming modulo $\simrel$-similarity with $\qdom$-attenuated Horn clauses. We have presented a declarative semantics for $\sqlp{\simrel}{\qdom}$ programs and a semantics-preserving program transformation which embeds $\sqlp{\simrel}{\qdom}$ into $\qlp{\qdom}$. As a consequence, the sound and complete procedure for solving goals in $\qlp{\qdom}$ by $\qdom$-qualified $SLD$ resolution and its implementation in the
$\toy$ system \cite{RR08} can be used to implement $\sqlp{\simrel}{\qdom}$ computations via the transformation.

Our framework is quite general due to the availability of different qualification domains, while the similarity relations proposed in \cite{Ses02} take fuzzy values in the interval $[0,1]$. In comparison to the multi-adjoint framework proposed in \cite{MOV04}, the $\qlp{\qdom}$ and $\sqlp{\simrel}{\qdom}$ schemes have a different motivation and scope, due to the differences between multi-adjoint algebras and qualification domains as algebraic structures. In contrast to the goal solving procedure used in the multi-adjoint framework, $\qdom$-qualified $SLD$ resolution does not
rely on costly computations of reductant clauses and has been efficiently implemented.

As future work, we plan to investigate an extension of the $\simrel$-based $SLD$ resolution procedure proposed in \cite{Ses02} to be used within the $\sqlp{\simrel}{\qdom}$ scheme, and to develop an extension of this scheme which supports lazy functional programming and constraint programming facilities. The idea of  similarity-based unification has been already applied in \cite{MP06a} to obtain an extension of {\em needed narrowing}, the main goal solving procedure of functional logic languages. As in the case of \cite{Ses02}, the similarity relations considered in \cite{MP06a} take fuzzy values in the real interval $[0,1]$.

\acks
The authors have been partially supported by the Spanish National Projects MERIT-FORMS (TIN2005-09027-C03-03) and PROME-SAS--CAM (S-0505/TIC/0407).

\bibliographystyle{plainnat}

\end{document}